\newcommand{\triad}{e}
\newcommand{\connection}{\omega}
\newcommand{\spacetime}{M}
\newcommand{\bundle}{P}
\newcommand{\abundle}{{\rm Ad}(\bundle)}
\newcommand{\Tr}[1]{{\rm Tr}\left(#1\right)}
\newcommand{\curvature}{F}
\newcommand{\intmetric}{\eta}
\newcommand{\metric}{g}
\newcommand{\complex}{\mathfrak{K}}
\newcommand{\simplex}{\Delta}
\newcommand{\manifold}{M}
\newcommand{\gluing}[1][\complex]{|#1|}
\newcommand{\isfaceof}{\prec}
\newcommand{\gsimplex}[1][i]{|\simplex_{#1}|}
\newcommand{\gsimplexm}[2]{|#1|_{\mu_{#2}}}
\newcommand{\gsimplexmp}[2]{|#1|_{\mu'_{#2}}}
\newcommand{\rotate}[1]{{\rm R}_{#1}}
\newcommand{\hyperplane}{\Sigma}
\newcommand{\dbundle}{{\rm SO(}\intmetric{\rm )}\times_{\rho} \mathbb{R}^n}
\newcommand{\tmap}[1][ij]{g_{#1}}
\newcommand{\vol}[1][n]{{\rm vol}_{#1}}
\newcommand{\curvaturematrix}{\mathcal{F}}
\newcommand{\edge}{l}
\newcommand{\length}{\ell}
\newcommand{\lattice}{\Lambda}
\newcommand{\latticeconst}{\epsilon}
\newcommand{\B}{B}
\newcommand{\translation}{T}
\newcommand{\area}{A}
\newcommand{\deficit}{\varepsilon}
\newcommand{\dihedral}{\theta}
\newcommand{\identity}{\mathbbm{1}}
\newcommand{\sgn}{{\rm sgn}}
\newcommand{\bigo}{\mathcal{O}}
\newcommand{\ine}{\iota}
\newcommand{\oute}{o}
\newcommand{\SL}{{\rm SL}(2,\mathbb{C})}
\DeclareMathOperator{\D}{D}
\DeclareMathOperator{\iu}{i}
\DeclareMathOperator{\id}{id}
\DeclareMathOperator{\interior}{int}
\DeclareMathOperator{\conv}{conv}
\DeclareMathOperator{\card}{\#}
\DeclareMathOperator{\diam}{diam}
\DeclareMathOperator{\mesh}{mesh}
\DeclareMathOperator{\diag}{diag}
\DeclareOldFontCommand{\rm}{\normalfont\rmfamily}{\mathrm}
\theoremstyle{theorem}
\newtheorem{thm}{Theorem}
\newtheorem{prop}{Proposition}
\theoremstyle{definition}
\newtheorem{ex}{Example}
\newtheorem{rmk}{Remark}
\begin{document}
\title{Relation between Regge calculus and BF theory on manifolds with defects}
\author{Marcin Kisielowski\footnote{Marcin.Kisielowski@gmail.com}\\
\textit{\footnotesize ${}^1$ Institute for Quantum Gravity, Chair for Theoretical Physics III,}\\ \textit{\footnotesize University of Erlangen-Nürnberg, Staudtstraße 7 / B2,
91058 Erlangen, Germany}\\
\textit{\footnotesize  ${}^2$ Instytut Fizyki Teoretycznej, Uniwersytet Warszawski,}\\
\textit{\footnotesize ul. Pasteura 5, 02-093 Warsaw, Poland}}
\date{ }
\maketitle
\begin{abstract}
In Regge calculus the space-time manifold is approximated by certain abstract simplicial complex, called a pseudo-manifold, and the metric is approximated by an assignment of a length to each 1-simplex. In this paper for each pseudomanifold we construct a smooth manifold which we call a manifold with defects. This manifold emerges from the purely combinatorial simplicial complex as a result of gluing geometric realizations of its n-simplices followed by removing the simplices of dimension n-2. The Regge geometry is encoded in a boundary data of a BF-theory on this manifold. We construct an action functional which coincides with the standard BF action for suitably regular manifolds with defects and fields. On the other hand, the action evaluated at solutions of the field equations satisfying certain boundary conditions coincides with an evaluation of the Regge action at Regge geometries defined by the boundary data. As a result we trade the degrees of freedom of Regge calculus for discrete degrees of freedom of topological BF theory. 
\end{abstract}
\section{Introduction}
\subsection{Motivation}
Our motivation for this research comes from the spin-foam approach to Quantum Gravity \cite{rovelli2014covariant,RovelliBook,Baezintro,PerezOldReview,PerezNewReview,Zakopane,EngleReview,LQG25,AshtekarRovelliReuterRev,ThiemannBook}. 
The spin-foam models can be thought to be quantum versions of Regge calculus. They are based on the observation made by Ponzano and Regge \cite{PRmodel} that an algebraic object called $6j$-symbol, appearing in recoupling theory of 4 quantum angular momenta, relates to the Regge action for 3-dimensional Euclidean gravity. This result has been extended to the physical case: 4-dimensional Lorentzian gravity. Such generalized symbol is called a vertex amplitude and is a basic building block of the so-called EPRL/FK spin-foam model \cite{EPRL,FK,BC,BCLorentz,BarrettAsymptotics,BRRmodel,SFLQG,cEPRL,OSM,EngleVertex,EngleVertexRev,BianchiHellmann}. The derivation of the spin-foam model is based on the Plebański formulation of Gravity as a theory of BF type \cite{Plebanski}. The spin-foam model is obtained by triangulating the space-time manifold, constructing a discrete path integral for BF theory and imposing the constraint on the 2-simplices of this triangulation. 

If we remove the 2-simplices of the triangulation, we obtain a manifold with defects. As claimed in \cite{BianchiLQGflat} spin-foam models can be interpreted as topological theories of BF type on such manifolds with defects. In fact, the vertex amplitude for the Lorentzian model with non-zero cosmological constant has been recently introduced as the $\SL$ Chern-Simons expectation value of certain Wilson-graph operator \cite{ChernSimonsSFI,ChernSimonsSFII,ChernSimonsSFIII,ChernSimonsSFIV}.

In this paper we study a classical relation between the (topological) BF theory \cite{BFI,BFII,BFIII,BFIV,BFV,BFVI} and Regge calculus \cite{Regge}. Let us consider a triangulation of space-time. The Einstein-Hilbert action evaluated at locally flat metrics singular at the 2-simplices of the triangulation coincides with the Regge action \cite{SorkinThesis} (see also \cite{FriedbergLee}). In fact we can remove the 2-simplicies (together with appropriate neighborhood), creating thus a manifold with defects, and evaluate the Einstein-Hilbert action with the Gibbons-Hawking-York boundary term \cite{GHYI,GHYII} at appropriate locally flat metric. When evaluated on such metrics the interior part of the action vanishes and the boundary term becomes the Regge action \cite{BianchiLQGflat}. This provides some classical justification for the interpretation of spin-foam models as topological theories of BF type on manifolds with defects. It is however not fully satisfactory. The first reason is that Einstein's theory of General Relativity is not a topological theory. When viewed as a theory of BF-type it involves constraints that are imposed not only at the boundary but also in the interior. If we relax these conditions and impose the constraints only at the boundary, we obtain a topological field theory (BF theory). In fact, we will show that the BF action evaluated at solutions of its field equations satisfying certain boundary conditions coincides with an evaluation of the Regge action at Regge geometries defined by the boundary data. Therefore we trade the topological degrees of freedom of the BF theory for the discrete degrees of freedom of Regge calculus. Second reason is that the locally flat metric is not the only solution of the Einstein's equations but this is the case for the BF theory. The third reason is that in the previous approach a starting point was a smooth space-time which is triangulated afterwards, whereas spin foams (and related to them Group Field Theories) refer to a combinatorial structure of abstract complexes and the space-time emerges as build from some basic building blocks \cite{GFTI,GFTII,GFTIII,GFTIV,GFTV,FeynmanSF}. Such combinatorial characterization gives a better control over possible histories of quantum gravitational field. In this paper we will start with an abstract simplicial complex equipped with certain geometric structure and from this data we will construct a smooth manifold with defects. This approach will be more general than the previous approach because each manifold with defects obtained by removing simplicial complexes from the triangulation can be obtained by our procedure but we will obtain also manifolds with defects that cannot be completed to manifolds without defects.
\subsection{Motivating example and our hypothesis}
The basic building block of the Regge geometries is the $\epsilon$-cone metric \cite{Regge}. The simplest example of such metric is the metric on a 2-dimensional cone with the apex removed, i.e. on $\mathbb{R}^2_*:=\mathbb{R}^2\slash \{0\}$, induced by the flat metric on $\mathbb{R}^3$:
$$
ds^2=\frac{1}{\sin^2\theta} dr^2+r^2d\varphi^2,
$$
where $0<2\theta\leq\pi$ is the apex angle of the cone. A generalization of this metric to three dimensions is the $\epsilon$-cone metric on $\manifold=\mathbb{R}^2_*\times [0,\ell]$:
$$
ds^2=dz^2+\frac{1}{\sin^2\theta} dr^2+r^2d\varphi^2.
$$
Let us switch to the triad formalism:
$$
\triad= \frac{1}{\sin \theta} dr\ \tau_1  + r d\varphi\ \tau_2 + dz\ \tau_3,
$$
where $\tau_i=-\iu \sigma_i, i\in\{1,2,3\}$ is a basis of the su(2) Lie algebra defined by the Pauli matrices $\sigma_1,\sigma_2,\sigma_3$. We denote by $\eta$ an invariant (positive-definite) scalar product on su(2) in which the basis is orthonormal. The $\epsilon$-cone metric is related to the internal metric $\eta$ by the standard relation:
$$
g(X,Y)=\eta(\triad(X),\triad(Y))
$$
for any $X,Y\in {\rm T}\manifold$. The connection 1-form is:
$$
\connection=\frac{1}{2} \sin\theta d\varphi \ \tau_3.
$$
Although the connection is flat
$$
\curvature=d\connection +[\connection,\connection]=0
$$
on $\manifold$ a holonomy around any loop $\gamma$ encircling the defect $\{(0,0,z):0 \leq z \leq \ell\}$ is non-trivial
$$
U_\gamma=\begin{bmatrix} \exp(-\iu \pi \sin \theta)&0\\ 0& \exp(\iu \pi \sin\theta)\end{bmatrix}.
$$
It is the rotation by an angle $2\pi-\epsilon$ preserving the direction of the defect. The angle $\epsilon=\pi \sin \theta$ is called a deficit angle. 

The fact that a holonomy of a flat connection on a non-simply connected manifold can be non-trivial is a basis of mathematical formulation of the Aharonov-Bohm effect \cite{AharonovBohm}. The defect models long and infinitely thin solenoid. Outside the solenoid the magnetic field (i.e. curvature form) vanishes but the magnetic flux through a surface cutting the solenoid transversely into two pieces is non-trivial. This can be interpreted as the fact that the magnetic field (i.e. curvature form) is concentrated on the defect:
\begin{equation}\label{eq:distributionalcurvature}
\curvature=\epsilon\, \delta(x,y)\, \tau_3 \, dx\wedge dy.
\end{equation}
Indeed, with this interpretation of the curvature, we recover the (abelian) Stokes' theorem:
$$
\exp(\int_{S_\gamma} \curvature)=U_\gamma,
$$
where $S_\gamma$ is any surface which boundary is $\gamma$.  

In three dimensions the triad coincides with the $B$-field and Palatini action coincides with the BF action:
$$
S[\triad,\connection]=\int_{\mathbb{R}^2\times [0,\ell]} \eta_{ij}\triad^i\wedge \curvature^j.
$$
Let us note that outside of the defect, i.e. on $\mathbb{R}^2_*\times [0,\ell]$, the configuration $(\triad,\connection)$ presented above satisfies the field equations:
$$
d\triad+\connection\wedge \triad=0,\quad \curvature=0.
$$
Although $\triad$ is only defined on the manifold with defect ($\mathbb{R}^2_*\times[0,\ell]$), we can use the extended definition of the curvature form \eqref{eq:distributionalcurvature} to extend the measure
$$
\eta_{ij} \triad^i\wedge \curvature^j
$$
to the manifold without defect ($\mathbb{R}^2\times [0,\ell]$). Remarkably, the evaluation at such configuration 
\begin{equation}
S[\triad,\connection]=\int_{\mathbb{R}^2\times [0,\ell]}  \epsilon\, \delta(x,y)\, dx\wedge dy\wedge dz= \ell\, \epsilon
\end{equation}
coincides with (a part of) the Regge action if we interpret the number $\ell$ as a length of an edge (the defect) and the angle $\epsilon$ as a deficit angle associated to this edge. We expect that it is a general feature: \emph{ BF action evaluated at solutions of its field equations satisfying certain boundary conditions coincides with an evaluation of the Regge action at Regge geometries defined by the boundary data.}
\subsection{Illustration of the construction of the action functional}
The example above is very simple and is missing some essential features of the construction presented in this paper. One of the simplifications is that the flat connection is reducible: the holonomy matrix is an element of the abelian $U(1)$ subgroup of the structure group $SU(2)$. Thanks to this abelian Stokes' theorem can be used. Another simplification is that the manifold with defect can be completed to a manifold without defect. We will use a more general definition of manifold with defects where this will not always be possible. A way out to deal with both problems is to define the action functional as a Riemann-like series. This solves the first problem because the surface ordered integral from the non-abelian version of the Stokes' theorem \cite{Arefeva} can be approximated by a regular surface integral for small loops. This infinitesimal version of the Stokes theorem has in fact been used in \cite{Arefeva} to derive the non-abelian Stokes theorem. We will solve the second problem by appropriately choosing the lattice: such that only the data on the manifold with defects will be used. Let us illustrate this idea on the following example. Let us consider a connection and a triad $\triad$ on $\manifold=\mathbb{R}^2_*\times[0,\ell]$. We do not assume that the connection is flat or trosionless. We introduce the following cubical lattice:
$$
\lattice=\latticeconst \left(\mathbb{Z}^3+(\frac{1}{2},\frac{1}{2},0)\right)\cap \manifold,
$$
where $\latticeconst$ is the lattice constant. We assume that $\latticeconst=\frac{\ell}{m}$ for some positive natural number $m$. For each point $p=\latticeconst( a+\frac{1}{2},b+\frac{1}{2},c)\in \lattice$ we consider a holonomy $U_{p;\mu\nu}$ along a plaquette (see for example \cite{QFTLattice}) containing the points:
$$
p, \quad p+\latticeconst\hat{\mu},\quad p+\latticeconst \hat{\mu}+\latticeconst \hat{\nu},\quad p+ \latticeconst \hat{\nu},
$$ 
where $\mu,\nu\in\{1,2,3\}$, $\hat{1}=(1,0,0),\,\hat{2}=(0,1,0),\,\hat{3}=(0,0,1)$.
Let $\curvaturematrix_{p;\mu\nu}$ be an element of the so(3) Lie algebra such that
\begin{equation}\label{eq:Fdef}
U_{p;\mu\nu}=\exp(\latticeconst^2\,\curvaturematrix_{p;\mu\nu}).
\end{equation}
\begin{figure}
\begin{subfigure}{0.45\textwidth}
 \centering
\includegraphics[width=.9\linewidth]{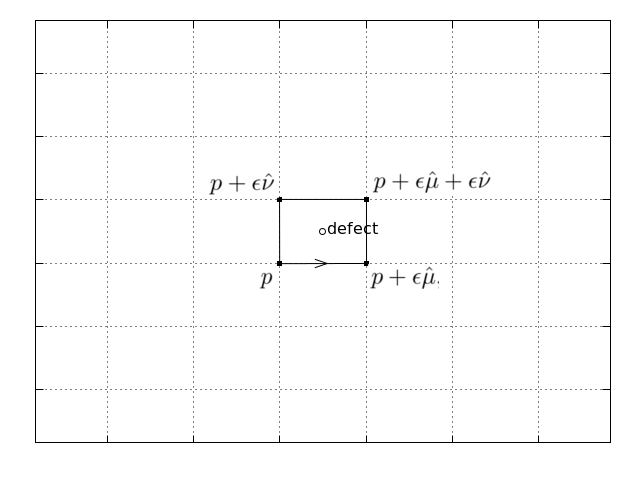}
  \caption{Section $z=0$ of the lattice. On this figure $p=\latticeconst(-\frac{1}{2},-\frac{1}{2},0)$, $\hat{\mu}=\hat{1}$, $\hat{\nu}=\hat{2}$. }
  \label{fig:lattice}
\end{subfigure}
\begin{subfigure}{0.45\textwidth}
 \centering
\includegraphics[width=.9\linewidth]{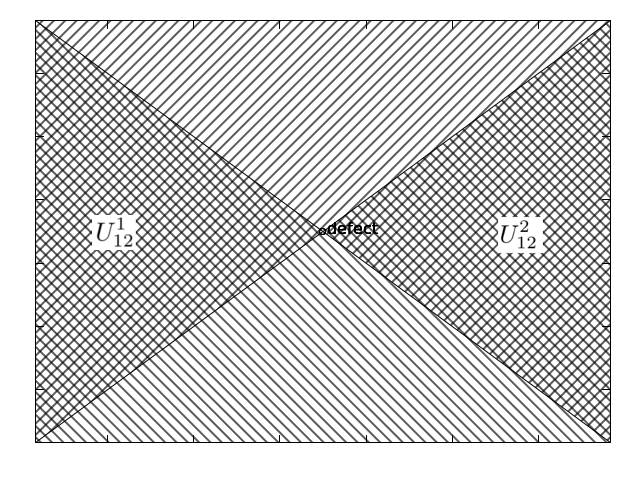}
  \caption{Covering of $\manifold=\mathbb{R}^2_*\times[0,\ell]$ with two open sets $U_1$ and $U_2$. The intersection $U_1\cap U_2$ is formed by two disjoint components $U^1_{12}$ and $U^2_{12}$.}
  \label{fig:Usets}
\end{subfigure}
\caption{We evaluate the action $S_\lattice$ defined in \eqref{eq:latticeaction} on a flat connection described by local connection 1-forms: $\connection_1=0$ defined on $U_1=\{(x,y,z):y<|x|, 0 \leq z\leq \ell\}$ and $\connection_2=0$ defined on $U_2=\{(x,y,z):y>-|x|, 0 \leq z\leq \ell\}$. The only non-trivial contribution to the action comes from the holonomies around loops encircling the defect $U_{p;12}$, where $p=\latticeconst(-\frac{1}{2},-\frac{1}{2},c), c\in\mathbb{Z}$. This holonomy is equal to a product of components of the locally constant transition map $g_{12}$ taking value $g_{12}^1$ on $U_{12}^1$ and $g_{12}^2$ on $U_{12}^2$: $U_{p;12}=g_{12}^1\ g_{21}^2$. The matrices $g_{12}^1$ and $g_{12}^2$ correspond to discretized connection in the tetrad formulation of Regge calculus.}
\label{fig:latticedef}
\end{figure}
We define the following functional 
\begin{equation}\label{eq:latticeaction}
S_\lattice[\triad,\connection]=\sum_{p\in \lattice} 2 \left(\Tr{\triad(\partial_1)|_p \curvaturematrix_{p;23}}+\Tr{\triad(\partial_2)|_p \curvaturematrix_{p;31}}+\Tr{\triad(\partial_3)|_p \curvaturematrix_{p;12}}\right) \latticeconst^3.
\end{equation}
Clearly, it is gauge invariant. If the fields $\triad$ and $\connection$ are defined by a restriction to $\manifold$ of global 1-forms on $\overline{\manifold}$ (in particular, if the SO($3$) principal fibre bundle over $\manifold$ is trivial), then in the limit of lattice constant going to $0$ we recover the BF action:
$$
S[\triad,\connection]:=\lim_{\latticeconst\to 0} S_\lattice[\triad,\connection]=\int_{\overline{\manifold}} \eta_{ij}\triad^i\wedge \curvature^j.
$$
Let us consider a flat connection (not necessarily torsionless) and a triad $\triad$ on $\mathbb{R}^2_*\times [0,\ell]$. We introduce a covering of $\mathbb{R}^2_*\times [0,\ell]$ with two open sets:
$$
U_1=\{(x,y,z):y<|x|, 0 \leq z\leq \ell\}, \quad U_2=\{(x,y,z):y>-|x|, 0 \leq z\leq \ell\}
$$
The intersection $U_1\cap U_2$ is formed by two disjoint components (see figure \ref{fig:Usets}):
$$
U_{12}^1=\{(x,y,z):x<-|y|, 0 \leq z\leq \ell\},\quad U_{12}^2=\{(x,y,z):x>|y|, 0 \leq z\leq \ell\},
$$
Clearly $S_\lattice[\triad,\connection]$ is gauge invariant. Let us fix the gauge such that the local connection 1-forms $\connection_1$ and $\connection_2$ defined on $U_1$ and $U_2$, respectively, vanish:
$$\connection_1=\connection_2=0.$$ In this gauge the transition function $g_{12}$ is locally constant (see for example \cite{Thurston}):
 $$
g_{12}(x)=\begin{cases}g_{12}^1\in {\rm SO(}n{\rm )\ if\ }x\in U^1_{12} ,\\ g_{12}^2\in {\rm SO(}n{\rm )\ if\ }x\in U^2_{12}.\end{cases}
 $$
A triad $e$ defines and is defined by local 1-forms $\triad_1$ and $\triad_2$ defined on $U_1$ and $U_2$, respectively. The so(3)-valued matrices $\curvaturematrix_{p;\mu\nu}$ defined by holonomies $U_{p;\mu\nu}$ by relation \eqref{eq:Fdef} have the following properties:
$$
\epsilon^2 \curvaturematrix_{p;\mu\nu}=\begin{cases}\curvaturematrix {\rm\ if\ }p=\latticeconst(-\frac{1}{2},-\frac{1}{2},c), c\in\mathbb{Z} {\rm\ and\ }\mu=1,\nu=2,\\
-\curvaturematrix {\rm\ if\ }p=\latticeconst(-\frac{1}{2},-\frac{1}{2},c), c\in\mathbb{Z} {\rm\ and\ }\mu=2,\nu=1,
 \\ 0{\rm\ otherwise},\\ \end{cases}
$$
where $\curvaturematrix$ is given by:
$$
\exp(\curvaturematrix)=g_{12}^1\ g_{21}^2.
$$
Due to these properties the only non-zero contributions to $S_\lattice[\triad,\connection]$ come from points $p$ in the set $\edge_\epsilon=\{\latticeconst(-\frac{1}{2},-\frac{1}{2},c): c\in\{0,1,\ldots,\frac{\ell}{\latticeconst}\}\}$. The limit $\lim_{\latticeconst\to 0} S_\lattice[\triad,\connection]$ becomes
$$
\lim_{\latticeconst\to 0} S_\lattice[\triad,\connection]=\lim_{\latticeconst\to 0}\Sigma_{p\in\edge_{\epsilon}}\Tr{\triad_1(\partial_z)|_p\, \curvaturematrix} \latticeconst=\Tr{\triad_\edge\, \curvaturematrix},
$$
where $\triad_\edge$ is the integral of $\triad$ over the defect
$$
\triad_\edge:=\lim_{\latticeconst\to 0}\int_0^\ell\ \triad_1(\partial_z)|_{p=(-\frac{1}{2}\latticeconst,-\frac{1}{2}\latticeconst,t)}\, dt.
$$
The action functional is similar to a part of the action functional in the tetrad formulation of the Regge calculus \cite{Khatsymovsky, Bander}. The role of discretized triad is played by the vector $\triad_\edge$ and the role of discretized connection is played by (locally constant) transfer matrices $g^1_{12}, g^2_{12}$ of a flat bundle.

\section{The space-time as a simplicial complex}
In this section we describe the space-time as certain simplicial complex equipped with a geometric structure. 

An \textbf{abstract simplicial complex} $\complex$ is a collection of non-empty finite sets called abstract simplices such that every non-empty subset of an abstract simplex in $\complex$ is an abstract simplex in $\complex$\cite{Pontryagin,LeeTopMan}. Every non-empty subset of an abstract simplex $\simplex$ is called a face of $\simplex$. We will write $\simplex' \isfaceof \simplex$ if $\simplex'$ is face of $\simplex$. In the following we will restrict to finite complexes, i.e. to complexes that are finite sets. Any element of an abstract simplex $\simplex$ is called a vertex of $\simplex$. A simplex $\Delta$ with $r+1$ vertices is said to have dimension $r$ and will be called an $r$-simplex. The maximum of the dimensions of simplices contained in $\complex$  is called the dimension of the complex $\complex$ and will be denoted by $\dim \complex$. We will denote by $\complex^{(r)}, r\in\{0,\ldots,\dim \complex\}$ the set of all $r$-simplices in $\complex$. Clearly $\complex=\bigcup_{r=0}^{\dim \complex} \complex^{(r)}$.

A \textbf{pseudo-manifold} \cite{Encyclopediaofmath} of dimension $n$ is a finite $n$-dimensional simplicial complex with the following properties:
\begin{itemize}
\item each (n-1)-dimensional simplex is a face of precisely two $n$-dimensional simplices,
\item any two $n$-simplices $\simplex$ and $\simplex'$ can be connected by a chain of simplices $\simplex_0=\simplex,\ldots,\simplex_k=\simplex'$ such that $\simplex_i\cap \simplex_{i+1}$ is an $(n-1)$-dimensional simplex,
\item each simplex is a face of some $n$-simplex.
\end{itemize}

We will further assume that the pseudomanifold is orientable, i.e. $H^{n}(\complex)=\mathbb{Z}$ \cite{Spanier}. We will denote by $\vol[\complex]$ a representative of the generator of this homology group. 

We will now introduce additional structure on a pseudo-manifold, which we will call a geometric structure. This structure is equivalent to assigning lengths to edges in the Regge calculus and will carry the information about the orientation. Let us first recall the definition of a (geometric) $n$-simplex \cite{Pontryagin}. A set of points $\{x_0,x_1,\ldots,x_k\}$ in $\mathbb{R}^n$ is called (affinely) independent if the system of vectors
$$
(x_1-x_0),\ (x_2-x_0),\ \ldots,\ (x_k-x_0)
$$
is linearly independent. A convex hull of a set of independent points $\{x_0,x_1,\ldots,x_k\}$ in $\mathbb{R}^n$ is called a (geometric) $k$-simplex in $\mathbb{R}^n$. Let us denote by $\simplex_1,\ldots, \simplex_N$ the $n$-dimensional simplices of $\complex$. Consider a family of maps $\{\sigma_i\}_{i\in\{1,2,\ldots,N\}}$, $\sigma_i:\simplex_i\to \mathbb{R}^n$ assigning a set of $n+1$ independent points in $\mathbb{R}^n$ to the set of vertices of a simplex $\simplex_i$. Each map $\sigma_i$ will be called a geometric realization of the (abstract) simplex $\simplex_i$. We will use the following notation:
\begin{eqnarray*}
\gsimplex:=\conv(\sigma_i(\simplex_i)),\\
|\simplex|_i:=\conv(\sigma_i(\simplex)){\rm\ for\ any\ } \simplex\subset \simplex_i,
\end{eqnarray*}
where $\conv$ denotes the convex hall of a set of points. We will call $|\simplex_i|,|\simplex|_i$ geometric simplices. By $b$ we will denote a barycenter of a (geometric) simplex:
$$
b(|\simplex_i|):=\frac{1}{n+1} \sum_{v\in \simplex_i} \sigma_i(v), \quad b(|\simplex|_{i}):=\frac{1}{\card\simplex} \sum_{v\in \simplex} \sigma_i(v).
$$ We will assume that each geometric simplex $\gsimplex$ has an orientation that agrees with the standard orientation of $\mathbb{R}^n$, i.e.
\begin{equation*}\label{eq:orientability}
\vol[\kappa]([v_0,v_1,\ldots,v_n])=\sgn\left(\epsilon_{I_1 I_2 \ldots I_n} (\sigma_i(v_1)-\sigma_i(v_0))^{I_1}\ldots (\sigma_i(v_N)-\sigma_i(v_0))^{I_n}\right),
\end{equation*}
where $\epsilon_{I_1 I_2 \ldots I_n}$ is the Levi-Civita symbol such that $\epsilon_{0 1 \ldots n-1}=1$. We say that two maps $\sigma_i$ and $\sigma_j$ are compatible if one of the following conditions is satisfied:
\begin{itemize}
\item $\dim (\simplex_i\cap\simplex_j)<n-1$,
\item there is an affine isometry $\alpha_{ij}$ of the Euclidean\slash Minkowski space $(\mathbb{R}^n,\intmetric)$ preserving the orientation such that
\begin{equation}\label{eq:gluingpoints}
\alpha_{ij}(\sigma_j(v))=\sigma_i(v){\rm\ for\ any\ }v\in \simplex_i\cap\simplex_j.
\end{equation}
\end{itemize}
\begin{rmk}\label{rmk:oppositesides}
Maps $\alpha_{ij}$ are unique. The condition \eqref{eq:gluingpoints} specifies an affine isometry uniquely up to reflection about the hyperplane $\hyperplane$ defined by points $\sigma_i(v), v\in\simplex_i\cap \simplex_j$. Let $v_0$ be the unique element in $\simplex_i\backslash\simplex_j$ and $w_0$ be the unique element in $\simplex_j\backslash\simplex_i$. The points $\sigma_i(v_{0})$ and $\alpha_{ij}(\sigma_j(w_0))$ are on opposite sides of $\hyperplane$ because the complex is orientable, the orientation of each geometric simplex $\gsimplex$ has an orientation that agrees with the standard orientation of $\mathbb{R}^n$ and $\alpha_{ij}$ is orientation preserving. This fixes the remaining ambiguity.
\end{rmk} 
A \textbf{geometric structure} on an $n$-dimensional pseudo-manifold $\complex$ is a family of pairwise compatible maps $\{\sigma_i\}_{i\in\{1,2,\ldots,N\}}$.  The maps $\alpha_{ij}$ have the following properties:
$$
\alpha_{ii}=\id, \quad \alpha_{ij}=\alpha_{ji}^{-1}. $$ We will call it a \textbf{gluing pattern}. We say that two geometric structures $\{\sigma_i\}_{i\in\{1,2,\ldots,N\}}$ and $\{\sigma_i'\}_{i\in\{1,2,\ldots,N\}}$ are equivalent if there exist affine isometries $r_i:\mathbb{R}^n\to \mathbb{R}^n$ such that
$$
\alpha'_{ij}=r_i^{-1} \alpha_{ij} r_j.
$$

\begin{ex}
Let $\complex$ be a $n$-dimensional pseudomanifold with Euclidean geometric structure $\sigma$. To any 1-simplex $e=\{v_0,v_1\}\in \complex^{(1)}$ we assign its length
$$
\length_e=\intmetric(\sigma_i(v_1) -\sigma_i(v_0),\sigma_i(v_1) -\sigma_i(v_0)),
$$
where $\simplex_i$ is any $n$-simplex which face is $e$. This way we obtain Regge variables corresponding to the geometric structure. Let us note that equivalent geometric structures lead to the same Regge variables. Since each geometric $n$-simplex is defined uniquely up to an affine isometry by the lengths of its edges, there is 1-1 correspondence between Regge variables and equivalence classes of geometric structures. 
\end{ex}

\begin{prop}\label{prop:canonicalgeometricstructure}Each pseudomanifold can be equipped with a geometric structure.\end{prop}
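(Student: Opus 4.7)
The plan is to realize every abstract $n$-simplex by the same shape, namely a \emph{regular} $n$-simplex in $\mathbb{R}^n$, exploiting the fact that the full symmetric group on its vertices acts by isometries. Concretely, I fix once and for all a regular $n$-simplex with vertex set $\{w_0,\ldots,w_n\}\subset \mathbb{R}^n$, ordered so that the tuple $(w_1-w_0,\ldots,w_n-w_0)$ is a positively oriented basis. Since $\complex$ is orientable, the generator $\vol[\complex]$ of $H^n(\complex)$ assigns to each $n$-simplex $\simplex_i$ a preferred equivalence class of vertex orderings; pick a representative $v_0^{(i)},\ldots,v_n^{(i)}$ in this class and define $\sigma_i$ by $\sigma_i(v_k^{(i)})=w_k$. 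By construction the orientation of $\gsimplex=\conv\{w_0,\ldots,w_n\}$ induced by the positive orientation of $\simplex_i$ agrees with the standard orientation of $\mathbb{R}^n$, which is the orientability condition required of a geometric structure.

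It remains to verify pairwise compatibility. If two distinct $n$-simplices $\simplex_i,\simplex_j$ meet in dimension less than $n-1$, there is nothing to check. Otherwise $\tau:=\simplex_i\cap\simplex_j$ is an $(n-1)$-face and I need an orientation-preserving affine isometry $\alpha_{ij}$ of $\mathbb{R}^n$ with $\alpha_{ij}\circ\sigma_j=\sigma_i$ on $\tau$. The key observation is that both $\sigma_i(\tau)$ and $\sigma_j(\tau)$ are $n$-element subsets of $\{w_0,\ldots,w_n\}$, i.e.\ sets of vertices of the same regular simplex; any bijection between them preserves all pairwise distances, because in a regular simplex all vertex-to-vertex distances are equal. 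Extend the prescribed correspondence $\sigma_j(v)\mapsto\sigma_i(v)$, $v\in\tau$, by sending the remaining vertex of the regular simplex on the $j$-side to the remaining vertex on the $i$-side; the resulting bijection of the vertex set $\{w_0,\ldots,w_n\}$ is an isometry on the vertex set and therefore extends uniquely to an affine isometry $\tilde\alpha_{ij}$ of $\mathbb{R}^n$. If $\tilde\alpha_{ij}$ already preserves orientation, set $\alpha_{ij}:=\tilde\alpha_{ij}$; otherwise compose it with reflection across the hyperplane through $\sigma_i(\tau)$, which fixes $\sigma_i(\tau)$ pointwise and flips orientation.

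The construction is transparent and I do not expect any genuinely hard step; the only thing to be careful about is that the existence of a positively oriented vertex ordering on each $\simplex_i$ really does follow from orientability, which is where the hypothesis $H^n(\complex)=\mathbb{Z}$ enters. All other verifications — that $\sigma_i$ lands on $n+1$ affinely independent points, that the orientation of $\gsimplex$ matches the standard one, that an orientation-preserving $\alpha_{ij}$ exists — are immediate from the regular-simplex model and its high symmetry. (Remark~\ref{rmk:oppositesides} then automatically ensures that the opposite vertices of the two glued simplices lie on opposite sides of the shared hyperplane, but this is a \emph{consequence} of the construction rather than something one needs to arrange by hand.)
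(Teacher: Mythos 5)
Your proposal is correct and follows essentially the same route as the paper: both realize every abstract $n$-simplex as the standard regular $n$-simplex with a vertex ordering chosen via the orientation class, and both deduce compatibility from the congruence of the faces of a regular simplex. You merely spell out in more detail the extension of the face correspondence to an affine isometry and the orientation correction by reflection, which the paper's proof leaves implicit.
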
 \begin{proof}We construct an Euclidean geometric structure on a pseudomanifold $\complex$ in the following way. Each map $\sigma_i:\simplex_i\to \mathbb{R}^n$ assigns to the vertices of the simplex $\simplex_i$ the points of the standard regular $n$-simplex \footnote{The standard regular $n$-simplex is an $n$-simplex in $\mathbb{R}^n$ which edges have unit length  \cite{Regularnsimplex}.} in such a way that the orientability condition \eqref{eq:orientability} is satisfied. This orientability condition and congruence of the faces of regular simplices guarantees that any pair of maps $\sigma_i$, $\sigma_j$ is compatible. \end{proof}

\section{The space-time as a manifold with defects}
With a pseudo-manifold $\complex$ equipped with a geometric structure we can associate the following structure:
\begin{itemize}
\item a finite set of (geometric) $n$-simplices;
\item a choice of pairs of $(n-1)$-dimensional faces of the geometric $n$-simplices such that each face appears in precisely one of the pairs: by definition each $(n-1)$-simplex is a face of precisely two $n$-simplices; denote by $\simplex_{ij}$ the $(n-1)$-simplex that is a face of $n$-simplices $\simplex_i$ and $\simplex_j$; it defines a pair of geometric $(n-1)$-simplices $|\simplex_{ij}|_i$ and $|\simplex_{ij}|_j$ that are faces of the geometric simplices $\gsimplex$ and $\gsimplex[j]$;
\item an affine identification map between the faces of each pair: such identification is given by the gluing pattern $\{\alpha_{ij}\}$.
\end{itemize}
Such structure is called a gluing \cite{Thurston}.  Each of the geometric $n$-simplices is equipped with topology induced from the standard topology on $\mathbb{R}^n$. The quotient space of the topological sum of all the $n$-simplices by the equivalence relation generated by the gluing pattern $\{\alpha_{ij}\}$ will be also called a gluing and denoted by $\gluing$ \footnote{For the definitions of the induced topology, topological sum and the quotient space we refer the reader to \cite{Spanier}.}. We remove from this space a closed subset formed by simplices of dimension not exceeding $n-2$. The resulting topological space will be denoted by $\manifold$. It can be equipped with a smooth atlas making it a smooth manifold, which will be called a \textbf{manifold with defects}. We will now construct a coordinate system. For each pair $(\simplex_i, \simplex_{ij})$ of an $n$-simplex $\simplex_i$ and its $(n-1)$-dimensional face $\simplex_{ij}=\simplex_i\cap \simplex_j$ we construct a closed subset of $\gsimplex$ denoted by $\overline{U}_{ij}$ that is a convex hull of $\sigma_i(\simplex_{ij})$ and the barycenter of $\gsimplex$. We will also denote by $\gsimplex$ and $\overline{U}_{ij}$ the canonical inclusions of $\gsimplex$ and $\overline{U}_{ij}$ into the gluing $\gluing$. The open covering of the manifold $\manifold$ is of the following form:
$$
U_{i}=\interior \left( \gsimplex\cup \bigcup_{\simplex_{ij}\isfaceof \simplex_i} \overline{U}_{ij} \right),
$$
where $\interior$ denotes the interior of a subset of $\gluing$. The coordinate charts
$$
\phi_{i}:U_i\to \mathbb{R}^n
$$
 are homeomorphisms from $U_i$ to open subsets of $\mathbb{R}^n$ defined in the following way: 
\begin{itemize}
\item We consider a continuous map $$\tilde{\phi}_{i}: \gsimplex\vee \bigvee_{j:\, \dim(\simplex_i\cap \simplex_j)=n-1} \gsimplex[j] \to \mathbb{R}^4 $$
defined by the following properties:
$$
\tilde{\phi}_{i}((x,i))=x, \quad \tilde{\phi}_{i}((x,j))=\alpha_{ji}^{-1}(x) {\rm\ for\ }j\neq i.
$$
\item Clearly it is continuous and has the property that $\tilde{\phi}_{i}((x,i))=\tilde{\phi}_{i}((y,j))$ whenever $(x,i)\sim (y,j)$. Therefore this map descents to a continuous map on the quotient $\gsimplex\vee \bigvee_{j:\, \dim(\simplex_i\cap \simplex_j)=n-1}\gsimplex[j]\slash\sim$. The resulting map restricted to $U_i$ is the coordinate chart $\phi_{i}: U_i \to \mathbb{R}^n$. Indeed, it is continuous. Thanks to the fact that $\alpha_{ij}$ is orientation preserving and satisfies property \eqref{eq:gluingpoints} it is also 1-1 (see also Remark \ref{rmk:oppositesides}).
\end{itemize}
Let $(U_i,\phi_i)$ and $(U_j,\phi_j)$ be two coordinate charts such that $U_i\cap U_j\neq \emptyset$. It is easy to check that the transition functions $\phi_{ji}: \phi_i(U_i\cap U_j)\to \phi_j(U_i\cap U_j),\ \phi_{ji}=\phi_j \phi_i^{-1}$ coincide with the gluing pattern:
$$
\phi_{ji}=\alpha_{ji}|_{\phi_i(U_i\cap U_j)}.
$$
Since $\alpha_{ij}$ are affine isometries of $\mathbb{R}^n$ the transition functions are clearly smooth. Therefore $\manifold$ is a smooth manifold. Let us note that this shows also that $\manifold$ is also a piecewise linear manifold and if the metric $\intmetric$ is Euclidean it is also an Euclidean manifold (for the definitions we refer the reader to \cite{Thurston}).

Let $\sigma_i$ and $\mu_i$ be two geometric structures on a pseudomanifold $\complex$. The corresponding gluings are homeomorphic and will be denoted by $\gluing$. Denote by $(\phi_i,U_i)$, $(\varphi_i,V_i)$ the corresponding coordinate charts on $\gluing$.
\begin{prop}The charts $(\phi_i,U_i)$ and $(\varphi_j,V_j)$ are compatible.\end{prop}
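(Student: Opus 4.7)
The plan is to identify the two gluings via the natural piecewise-affine homeomorphism $\beta$ given on each geometric $n$-simplex $\gsimplex[k]$ by the unique affine map $A_k$ of $\mathbb{R}^n$ sending $\sigma_k(v)\mapsto\mu_k(v)$ for every vertex $v\in\simplex_k$, and then to verify that each transition $\varphi_j\circ\phi_i^{-1}$ is smooth on its domain of definition inside $\mathbb{R}^n$.

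First I would check that $\beta$ descends to a homeomorphism of the quotients. On any shared $(n-1)$-face the two affine extensions $A_i\circ\alpha_{ij}$ and $\alpha'_{ij}\circ A_j$ both send $\sigma_j(v)$ to $\mu_i(v)$ for every $v\in\simplex_i\cap\simplex_j$, so they agree on the $n$ affinely independent points spanning $|\simplex_{ij}|_j$, and therefore on the whole face; combined with Remark~\ref{rmk:oppositesides} this fixes any residual reflection ambiguity. Since each $A_k$ is affine it preserves barycenters, so $\beta$ takes the closed region $\overline{U}_{ki}\subset\gsimplex[k]$ onto the analogous closed subset of the $\mu$-realisation, and hence $U_i$ onto $V_i$.

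Next I would compute $\varphi_j\circ\phi_i^{-1}$ piecewise from the definitions of $\tilde{\phi}_i$ and $\tilde{\varphi}_j$. A point $x\in\phi_i(U_i\cap V_j)$ has a preimage either in $\gsimplex$, in which case $\phi_i^{-1}(x)=[(x,i)]$, or in an appendage $\overline{U}_{ki}$ from a neighbouring $n$-simplex $\simplex_k$, in which case $\phi_i^{-1}(x)=[(\alpha_{ki}(x),k)]$. Applying $\beta$ and then $\tilde{\varphi}_j$ yields in each case an explicit composition of the affine maps $A$, $\alpha$ and $(\alpha')^{-1}$, so on each piece the transition is the restriction of a single affine map of $\mathbb{R}^n$, hence smooth on that piece.

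The main obstacle --- and the key place the gluing data must be used --- is to verify that the affine formulas on neighbouring pieces agree not merely on the shared $(n-1)$-face but in a full neighbourhood of it, so that the piecewise map is smooth across the face rather than merely continuous. For this I would show that the identity $A_i\circ\alpha_{ij}=\alpha'_{ij}\circ A_j$ extends from the face to all of $\mathbb{R}^n$; this is where the equivalence of the two geometric structures, $\alpha'_{ij}=r_i^{-1}\alpha_{ij}r_j$ with $A_k=r_k^{-1}$, enters. Under this identification the two neighbouring pieces both reduce to the same affine map $x\mapsto r_j^{-1}\alpha_{ji}(x)$ on the overlap, so $\varphi_j\circ\phi_i^{-1}$ is the restriction of a single affine map of $\mathbb{R}^n$ and hence smooth; the inverse is handled symmetrically, giving the claimed compatibility.
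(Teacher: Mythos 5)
Your overall route is the same as the paper's: you introduce the per-simplex affine maps $A_k$ (the paper's $\beta_k$, defined by $\mu_k=\beta_k\circ\sigma_k$) and compute $\varphi_j\circ\phi_i^{-1}$ piecewise as compositions of the maps $A$, $\alpha$ and $(\alpha')^{-1}$. You go beyond the paper in isolating the one genuinely nontrivial step: the overlap $U_i\cap V_j$ straddles the shared $(n-1)$-faces, so the transition function is a priori only \emph{piecewise} affine, and smoothness requires the two affine formulas to agree as maps of $\mathbb{R}^n$, i.e. $A_i\circ\alpha_{ij}=\alpha'_{ij}\circ A_j$ on all of $\mathbb{R}^n$ and not merely on the hyperplane spanned by $\sigma_j(\simplex_i\cap\simplex_j)$. (The paper's proof asserts $\tilde{\phi}_{ji}=\beta_j\circ\alpha_{ji}$ on the whole of $\phi_i(U_i\cap V_j)$ and is silent on exactly this point.) Your second paragraph, establishing that the piecewise map descends to the quotient and sends $U_i$ onto $V_i$, is fine, since descent only needs agreement on the identified face points.

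The gap is in how you close that key step. You invoke $\alpha'_{ij}=r_i^{-1}\alpha_{ij}r_j$ with $A_k=r_k^{-1}$, which is precisely the definition of the two geometric structures being \emph{equivalent}. The proposition assumes no such thing: it is stated for two arbitrary geometric structures on $\complex$, and its intended content is that inequivalent structures still yield charts in one smooth atlas. Moreover the $r_i$ in the definition of equivalence are affine \emph{isometries}, while $A_k$ is generically not an isometry (the $\sigma$- and $\mu$-realisations of $\simplex_k$ need not be congruent), so the substitution $A_k=r_k^{-1}$ is unavailable even for equivalent structures. Without some such input, $A_i\circ\alpha_{ij}$ and $\alpha'_{ij}\circ A_j$ are two affine maps known to agree only on the $n$ affinely independent points $\sigma_j(v)$, $v\in\simplex_i\cap\simplex_j$; an affine map of $\mathbb{R}^n$ is pinned down only by $n+1$ such points, and the images of the opposite vertices need not match. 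Concretely, in $n=2$ glue two triangles along an edge, take both $\sigma$-triangles equilateral and both $\mu$-triangles right triangles with one leg along the shared edge: then $A_1$ and $A_2$ agree on the shared edge but have different linear parts, and $\varphi_1\circ\phi_1^{-1}$ has a discontinuous Jacobian across that edge. So the step you flagged cannot be closed by the argument you give; what your computation does deliver is compatibility of the charts in the piecewise-linear sense, while genuine smooth compatibility needs either an extra hypothesis relating $\sigma$ and $\mu$ or a non-affine identification of the two gluings.
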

\begin{proof}
Let us note that for each $i$ there exists an affine map $\beta_i:\mathbb{R}^n\to \mathbb{R}^n$ such that 
$$
\mu_i=\beta_i \circ \sigma_i.
$$
The transition functions $\tilde{\phi}_{ji}: \phi_i(U_i\cap V_j)\to \varphi_j(U_i\cap V_j),\ \phi_{ji}=\varphi_j \phi_i^{-1}$ are given by:
$$
\tilde{\phi}_{ji}=\beta_j\circ\alpha_{ji}|_{\phi_i(U_i\cap V_j)}.
$$
Clearly, they are smooth diffeomorphism and therefore the charts are smoothly compatible.
\end{proof}
As a result different geometric structures on a pseudomanifold correspond to different charts in the same smooth structure (i.e. maximal smooth atlas). Let us recall that Proposition \ref{prop:canonicalgeometricstructure} guarantees that each pseudomanifold can be equipped with a geometric structure. As a result to each pseudomanifold there corresponds a unique manifold with defects.
\begin{ex}\label{ex:manifold}
Let $S=\{v_0,v_1,v_2,v_3\}$ be a 4-element set. We consider a simplicial complex $\complex$ that is the set of all proper subsets of $S$. In fact, this is a pseudo-manifold, called also a simplicial sphere. We equip the pseudo-manifold with the geometric structure constructed in the proof of Proposition \ref{prop:canonicalgeometricstructure}: to each 3-element subset $\simplex_{i}, i\in\{0,1,2,3\}$ of $S$ such that $\simplex_i\cup \{v_i\}=S$ we assign a map $\sigma_i:\simplex_i\to \mathbb{R}^2$ defined by the following formula:
$$
(\sigma_i(v_0),\ldots,\sigma_i(v_{i-1}),\sigma_i(v_{i+1}),\ldots,\sigma_i(v_3))=\begin{cases}(p_0,p_1,p_2),\textrm{ if }i\textrm{ is even},\\ (p_2,p_1,p_0),\textrm{ if }i\textrm{ is odd},\end{cases}
$$
where
$$
p_0=(-\frac{1}{2},-\frac{\sqrt{3}}{6}),\quad p_1=(\frac{1}{2},-\frac{\sqrt{3}}{6}),\quad p_2=(0,\frac{\sqrt{3}}{3}).
$$
Clearly the geometric 2-simplices $\gsimplex$ are equilateral triangles. The gluing pattern is
\begin{align*}
\alpha_{01}(x)=\rotate{\frac{\pi}{3}}(x)(x-p_1)+p_1, \quad \alpha_{02}(x)=\rotate{\frac{\pi}{3}}(x-p_2)+p_2,\\ \alpha_{03}(x)=\rotate{\pi}(x-p_0)+p_1,\quad
\alpha_{12}(x)=\rotate{\pi}(x-p_0)+p_2,\\ \alpha_{13}(x)=\rotate{\frac{\pi}{3}}(x-p_2)+p_2,\quad \alpha_{23}(x)=\rotate{-\frac{\pi}{3}}(x-p_1)+p_1,
\end{align*}
where $\rotate{\alpha}$ is a clockwise rotation around $(0,0)$ through the angle $\alpha$. The gluing is homeomorphic to $\mathbb{S}^2$ (a boundary of a tetrahedron) and the manifold with defects is diffeomorphic with a sphere with 4 punctures -- see figure \ref{fig:punctured_sphere}.
\begin{figure}
\begin{subfigure}{0.5\textwidth}
 \centering
\includegraphics[width=.8\linewidth]{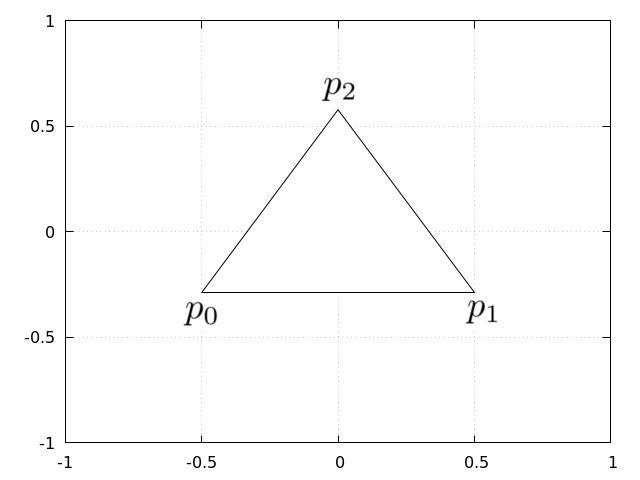}
  \caption{Each geometric 2-simplex $\gsimplex$ in Example \ref{ex:manifold} is an equilateral triangle.}
  \label{fig:triangle}
\end{subfigure}
\begin{subfigure}{0.5\textwidth}
 \centering
\includegraphics[width=.8\linewidth]{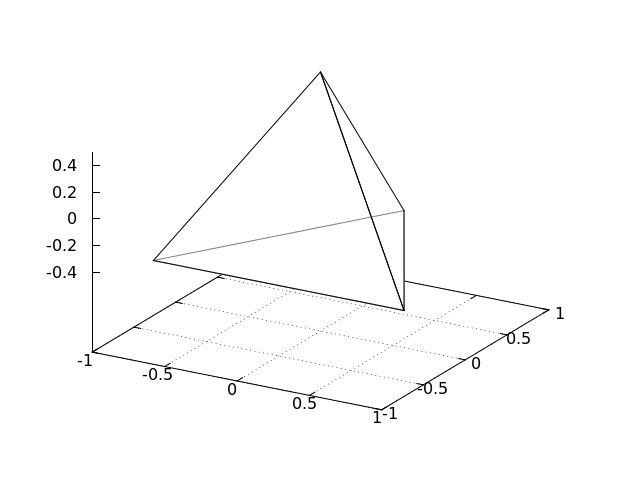}
  \caption{The gluing $\gluing$ is homeomorphic to a boundary of a tetrahedron, i.e. a 2-sphere.}
  \label{fig:gluing}
\end{subfigure}
\begin{subfigure}{0.5\textwidth}
 \centering
\includegraphics[width=.8\linewidth]{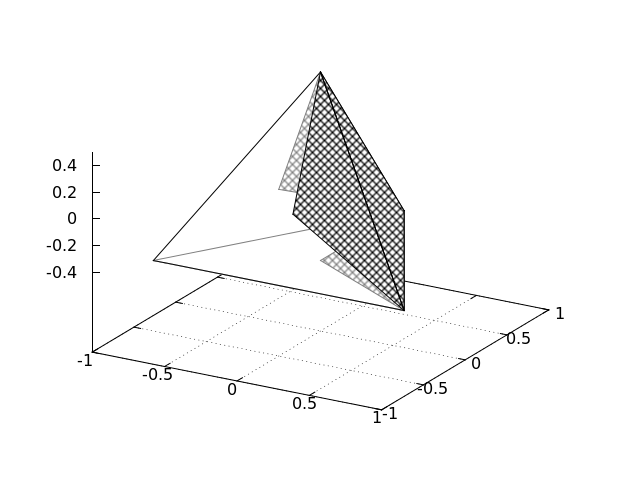}
  \caption{Each open set $U_i$ is an interior in $\gluing$ of a set $\gsimplex\cup \bigcup_{\simplex_{ij}\isfaceof \simplex_i} \overline{U}_{ij}$.}
  \label{fig:tetrahedron_open_set}
\end{subfigure}
\begin{subfigure}{0.5\textwidth}
 \centering
\includegraphics[width=.8\linewidth]{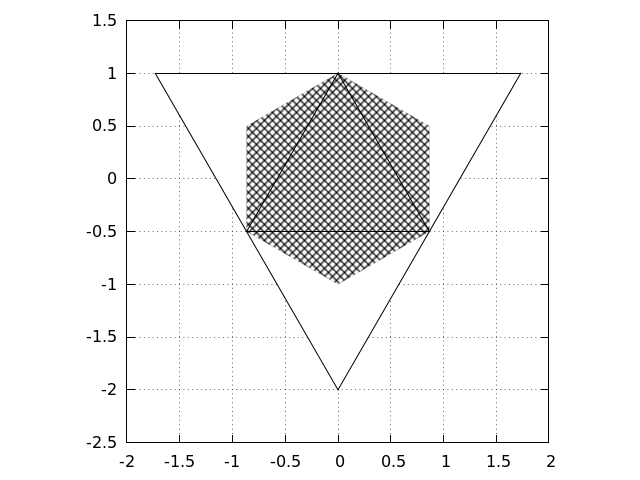}
  \caption{The coordinate map $\phi_i$ maps the open set $U_i$ to an a set $\phi_i(U_i)$. The set $\phi_i(U_i)$ is denoted on the figure by the shaded area. It is a subset of a set $\tilde{\phi}_i(\gsimplex\vee \bigvee_{j:\, \dim(\simplex_i\cap \simplex_j)=n-1}\gsimplex[j]\slash\sim)$ formed by 4 triangles.}
  \label{fig:open_set}
  \end{subfigure}
\caption{Illustration to Example \ref{ex:manifold}.}
\label{fig:exgluing}
\end{figure}
\begin{figure}
\centering
\includegraphics[width=.3\textwidth]{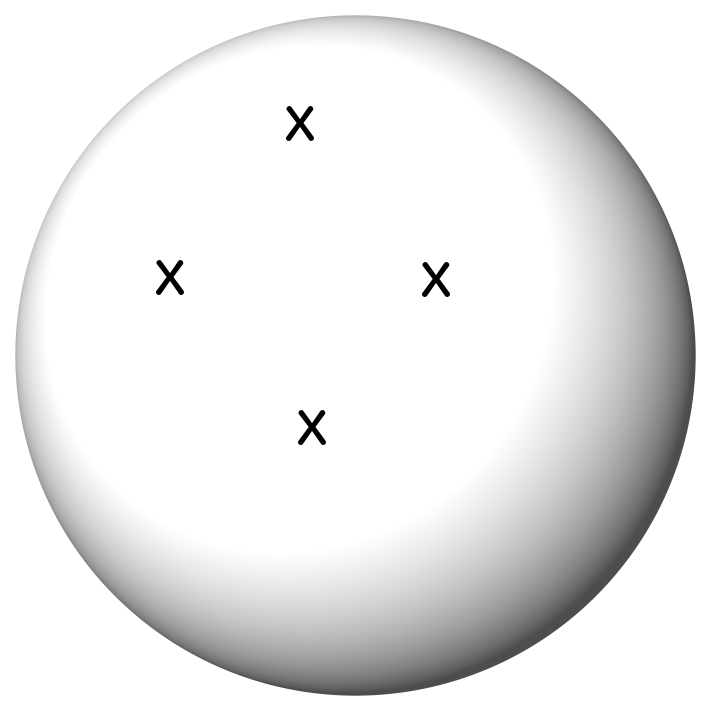}
\caption{In Example \ref{ex:manifold} the manifold with defects $\manifold$ is diffeomorphic with a sphere with 4 punctures.}
\label{fig:punctured_sphere}
\end{figure}
\end{ex}
\begin{rmk}
A gluing in general is not a topological manifold because there may be points which do not have any neighbourhood homeomorphic to a ball. Interesting example of such gluing is Example 1.4.8 from \cite{Thurston}. However such points may lay only inside simplices of dimension not exceeding $n-2$. Since we remove such simplices from the gluing, we always obtain a manifold. 
\end{rmk}
\begin{rmk}
A gluing is more general concept than a pseudo-manifold with geometric structure. The non-trivial requirement is that an intersection of two simplices has to be a simplex. However, our construction of a manifold can be easily generalized to gluings. In this generalization:
\begin{itemize} 
\item The map $\tilde{\phi}$ should be constructed for sets $\gsimplex\vee \bigvee_{\simplex_{ij}\isfaceof \simplex_i} \overline{U}_{ij}$.
\item The intersection of two open sets $U_i\cap U_j$ could be disconnected. In such case the transition function is locally constant and coincides with different affine identification maps on different connected components.
\end{itemize}
\end{rmk}

There is a canonical metric on $\manifold$. In the coordinate system constructed above it coincides with the metric $\intmetric$ on $\mathbb{R}^n$:
$$
\metric_{\mu\nu}=\intmetric_{\mu\nu}.
$$
Indeed, since in this coordinate system the transition functions are SO($n$) transformations and $\intmetric$ is SO($n$) invariant, the family of tensors $\intmetric|_{\phi_i(U_i)}$ defines a tensor field on $\manifold$. In the following we will use the same symbol $\intmetric$ to denote the metric on $\mathbb{R}^n$ and the canonical metric on $\manifold$.

\section{BF theory on manifolds with defects}
\subsection{BF theory}
BF theory is a gauge theory, which gauge group can be arbitrary group such that its Lie algebra is equipped with and invariant nondegenerate bilinear form. In this paper we will be interested only in the SO($\intmetric$) group, i.e. the group of matrices acting in $\mathbb{R}^n$, having unit determinant and preserving a bilinear form $\intmetric$. We consider an SO($\intmetric$) principal fibre bundle $\bundle$ over $n$-dimensional oriented smooth manifold $\spacetime$ modeling the space-time and the fibre bundle $\abundle$ associated to $\bundle$ via the adjoint action of SO($\intmetric$) on its Lie algebra. The field variables are: a connection $\connection$ on $\bundle$ and an $\abundle$-valued $(n-2)$-form $\B$.  We will denote by $\curvature$ the curvature of the connection $\connection$. Let us assume for a moment that $\manifold$ is compact. The action for the theory is:
\begin{equation}\label{eq:BFactioncompact}
S[\B,\connection]=\int_\spacetime\, \Tr{\B\wedge \curvature}.
\end{equation}
The theory of Gravity can be treated as a constrained BF theory \cite{Plebanski}. In this case $\dim \manifold=4$ and $\intmetric=\diag(-1,1,1,1)$ is the Minkowski metric. Consider a fibre bundle $\dbundle$ associated to $\bundle$ via the defining representation $\rho$. Let $e^I$ be a vierbein, i.e. $\dbundle$-valued $1$-form. The constraints are restricting the $\B$ fields to be of the form
$$
\B^{IJ}=\star \left( e^I \wedge e^J \right)
$$
for some vierbein $e^I$. The star $\star$ denotes the Hodge star and the internal indices denoted by capital Latin letters $I,J,K,L,\ldots$ are lowered and raised with the metric $\intmetric$.
\subsection{BF theory on manifolds with defects}\label{sc:action_functional}
There is a straightforward generalization of the action functional \eqref{eq:BFactioncompact} to our manifolds with defects. One could simply define the action functional to be an integral over the manifold with defects of the form $\Tr{\B\wedge \curvature}$. However such definition does not deal properly with field configurations such that the curvature $\curvature$ has distributional support concentrated on the defect. Since they are crucial in the Regge calculus, we will propose a regularization of the action functional that includes such configurations. Our regularization is based on appropriate discretization of the action functional. We will start with a subdivision of the manifold with defects and construct the action functional as a refinement limit of action functionals defined on subdivisions.

\subsubsection{Subdivisions of a pseudomanifold}
Let $(\complex,\sigma)$ and $(\complex',\sigma')$ be two pseudo-manifolds of dimension $n$ with geometric structures. We will say that $\varphi:\gluing[\complex']\to \gluing$ is \textbf{linear} if the following condition is satisfied: if $\simplex'_{i'}=\{v_0,\ldots,v_n\}$ is an $n$-simplex in $\complex$ and $x=\lambda_0 \sigma'_{i'}(v_0)+\ldots+\lambda_{n} \sigma'_{i'}(v_n)\in \gsimplex[i']'$ then the points $\varphi(\sigma'_{i'}(v_0)),\ldots, \varphi(\sigma'_{i'}(v_n))$ belong to some simplex $\gsimplex$ in $\gluing$ and $\varphi(x)=\lambda_0 \varphi(\sigma'_{i'}(v_0)) + \ldots +\lambda_n \varphi(\sigma'_{i'}(v_n))$. Let $\complex$ and $\complex'$ be two pseudomanifolds and let $\mathring{\sigma},\mathring{\sigma}'$ be the corresponding canonical geometric structures constructed in Proposition \ref{prop:canonicalgeometricstructure}. Let us denote by $\gluing[\complex']_0$ and $\gluing_0$ the corresponding gluings. We will say that $\complex'$  is a \textbf{subdivision} of $\complex$ if there is a linear homeomorphism $\varphi: \gluing[\complex']_0\to\gluing_0$. 

A \textbf{diameter} of an $n$-simplex $\simplex'_{i'}\in {\complex'}^{(n)}$ in a subdivision $\complex'$ of $\complex$ will be defined by
$$
\diam(\simplex'_{i'})=\sup\{d(x,y): x,y\in \varphi(|\simplex_{i'}'|)\},
$$  
where $d$ is the standard Euclidean metric in $\mathbb{R}^n$. Let us note that the diameter of an $n$-simplex in the subdivision $\complex'$ is calculated with respect to the metric on $\complex$. In this sense we use on $\complex'$ a metric induced from $\complex$. A \textbf{mesh} of a subdivision $\complex'$ of $\complex$ will be defined by:
$$
\mesh(\complex')=\sup\{\diam(\simplex_{i'}'): \simplex_{i'}'\in {\complex'}^{(n)} \}.
$$ 
A mesh of a subdivision can be arbitrary small \cite{Spanier}. We will say that a sequence of subdivisions $\complex'_m$ is a \textbf{regular refinement} if it satisfies the following conditions: 
\begin{itemize}
\item
$$
\lim_{m\to \infty}\mesh(\complex'_m)=0,
$$
\item
in each $\complex'_m$ every $(n-2)$-simplex $\simplex_f$ is shared by at most $N_{\rm max}$ $n$-simplices, where the number $N_{\rm max}$ does not depend on $m$,
\item the number $N_{\simplex,m}$ of $k$-simplices in $\complex'_m$ subdividing a $k$-simplex $\simplex$ satisfies:
$$N_{\simplex,m} =\bigo(\epsilon^{-k}_m),$$
where $\epsilon_m=\mesh(\complex'_m)$.
\end{itemize}
\begin{figure}
\begin{subfigure}{0.45\textwidth}
 \centering
\includegraphics[width=.9\linewidth]{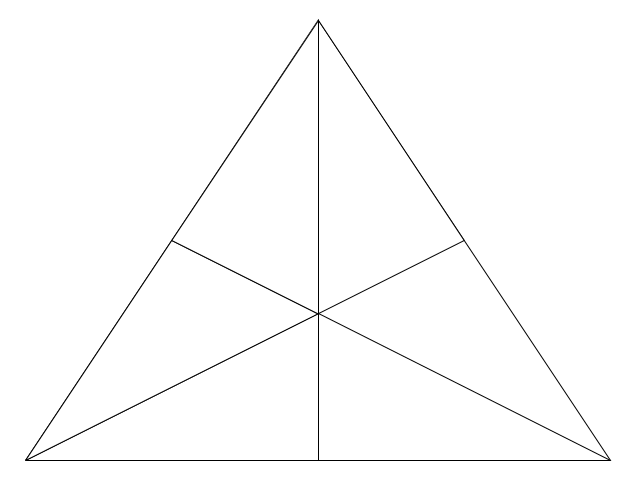}
  \caption{Barycentric subdivision of a triangle}
  \label{fig:barycentric}
\end{subfigure}
\begin{subfigure}{0.45\textwidth}
 \centering
\includegraphics[width=.9\linewidth]{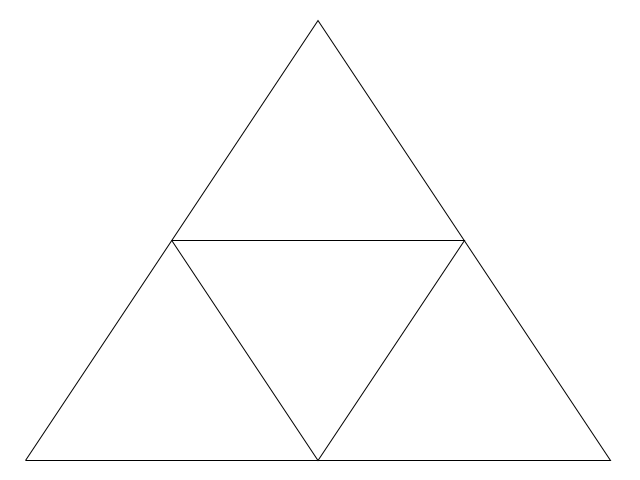}
  \caption{Edgewise subdivision ${\rm Esd}_2$ of a triangle. In two dimensions considered first by Freudenthal \cite{Freudenthal}. Generalized to any dimension in \cite{EdgeSubdivision}.}
  \label{fig:edge_subdivision}
\end{subfigure}
\caption{We consider only sequences of subdivisions $\complex'_m$ satisfying certain regularity conditions. In particular the (iterated) barycentric subdivision (a) does not satisfy the requirement, because we assume that in each $\complex'_m$ every $(n-2)$-simplex $\simplex_f$ is shared by at most $N_{\rm max}$ $n$-simplices, where the number $N_{\rm max}$ does not depend on $m$. An example of a subdivision satisfying these conditions is the edgewise subdivision from \cite{EdgeSubdivision} illustrated on (b).}
\label{fig:regularsubdivisions}
\end{figure}

The first condition on the sequence of subdivisions is obvious. The next two requirements will be used in Section \ref{sc:ELBFdefects}. They exclude for example the barycentric subdivision \cite{Spanier} (see figure \ref{fig:barycentric}). An example of a regular refinement is the following. The complex $\complex'_m$ is obtained by performing an edgewise subdivision ${\rm Esd}_m$ from \cite{EdgeSubdivision} of each $n$-simplex (see for example figure \ref{fig:edge_subdivision}). By the main theorem of \cite{EdgeSubdivision} we know that each $n$-simplex is divided into $m^n$ $n$-simplices of equal volume which fall into at most $\frac{n!}{2}$ congruence classes. Denote by $\simplex_1,\ldots,\simplex_M$ the $n$-simplices of $\complex$ and by $|\simplex_i^j|, j\in\{1,\ldots, \frac{n!}{2}\}$ representatives of the congruence classes of possible $n$-simplices appearing in the subdivisions of the $n$-simplices $|\simplex_i|$ scaled such that the volume of $|\simplex_i^j|$ is equal to the volume of $|\simplex_i|$. Let 
$$
{\rm Max}(\complex)=\max\{\diam(|\simplex^j_i|): i\in\{1,\ldots,M\}, j\in \{1,\ldots, \frac{n!}{2}\}\},
$$  
$$
{\rm Min}(\complex)=\min\{\diam(|\simplex^j_i|): i\in\{1,\ldots,M\}, j\in \{1,\ldots, \frac{n!}{2}\}\}.
$$  
Using this notation we have: 
$$
{\rm Min}(\complex)\frac{1}{m}\leq\mesh(\complex'_m)\leq {\rm Max}(\complex) \frac{1}{m}.
$$
This shows that $\mesh(\complex'_m)=\bigo(m^{-1})$ and therefore the first property is shown. We will now show the second property. Denote by $\dihedral_f$ the sum of dihedral angles around the $(n-2)$-simplex $\simplex_f$. Let
$$\theta_{\rm max}=\max\{\{2\pi\}\cup\{\dihedral_f:\simplex_f\in \complex^{(2)}\}\}.$$ Let us notice that the sum of dihedral angles around an $(n-2)$-simplex $\simplex_{f'}$ in $\complex'_m$ (in the Euclidean metric induced on the subdivision) is smaller or equal $\theta_{\rm max}$:
$$
\dihedral_{f'}\leq \theta_{\rm max}.
$$ 
Let $\alpha_{\rm min}$ be the minimal dihedral angle that can appear in the $n$-simplices $|\simplex^j_i|$. The upper bound $N_{\rm max}$ from the second point can be chosen to be the smallest integer greater or equal {$\theta_{\rm max}\slash \alpha_{\rm min}$}:
$$N_{\rm max}=\left\lceil\frac{\theta_{\rm max}}{\alpha_{\rm min}}\right\rceil.$$ What is left is to check the third point. Another property of the Edge Subdivision of a simplex is that its faces are subdivided in the same way. This in particular means that the number of $k$-simplices subdividing a $k$-simplex $\simplex$ is $m^{k}$:
$$
N_{\simplex,m}= m^{k}.
$$ 
We have shown above that $m=\bigo(\epsilon^{-1}_m)$. Therefore
$$
N_{\simplex,m}= \bigo(\epsilon^{-k}_m).
$$ 

\subsubsection{The action functional}
Let $\complex$ be a pseudomanifold of dimension $n$ and $\manifold$ be the corresponding manifold with defects. We consider a sequence of $n$-simplices $f=(\simplex_1,\ldots,\simplex_N)$ satisfying the following properties:
\begin{itemize}
\item $\simplex_f:=\bigcap_{i=1}^N \simplex_i$ is an $(n-2)$-simplex,
\item $\simplex_k\cap\simplex_{k+1}$ is an $(n-1)$-simplex for $k\in\mathbb{Z}_N$.
\end{itemize} 
We will call $f$ a face dual to $\simplex_f$ or simply a dual face. For each $(n-2)$-simplex there exists a dual face, because each $(n-1)$-simplex is a face of precisely two different $n$-simplices, in a given simplex $\simplex_k$ the simplex $\simplex_f$ is a face of precisely two different $(n-1)$-simplices, the simplicial complex $\complex'$ is connected and finite. Let us note that whenever $f=(\simplex_1,\ldots,\simplex_N)$ is a face dual to an $(n-2)$-simplex, then so is $f^{-1}:=(\simplex_1,\simplex_N,\simplex_{N-1},\ldots,\simplex_2)$.  In fact, any cyclic permutation of the $n$-simplices in $f$ or $f^{-1}$ is a face dual to the $(n-2)$-simplex $\simplex_f$ and there are $2N$ faces dual to this face.

\begin{figure}
\begin{subfigure}{0.5\textwidth}
 \centering
\includegraphics[width=.9\linewidth]{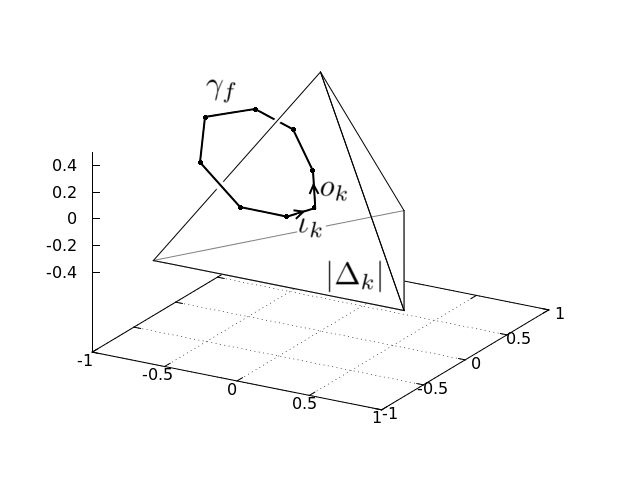}
  \caption{Curve $\gamma_{f}$. It is a composition of curves $\ine_k$ and $\oute_k$. Three-dimensional example ($n=3$).}
  \label{fig:path}
\end{subfigure}
\begin{subfigure}{0.5\textwidth}
 \centering
\includegraphics[width=.9\linewidth]{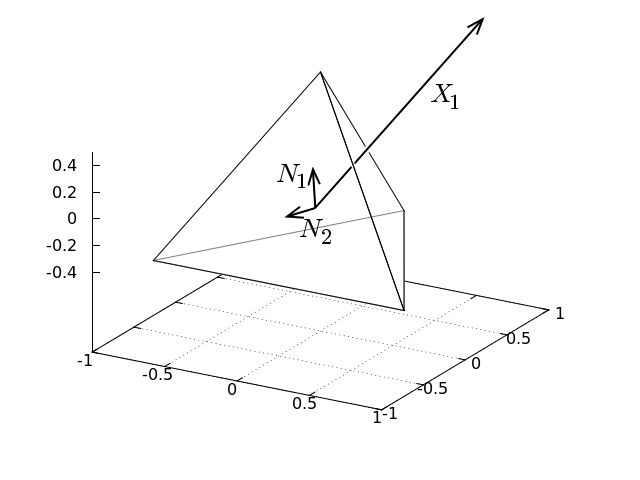}
  \caption{Vector fields $X_1,\ldots,X_{n-2},N_1,N_2$. Three-dimensional example ($n=3$).}
  \label{fig:vector_fields}
\end{subfigure}
\caption{We construct the action functional as a limit of discretized actions. We focus on 4-dimensional case but for simplicity we draw three dimensional examples. In 4 dimensions the curvature is discretized by infinitesimal holonomies $U_{\gamma_f}=\pm \exp(\curvature_f)$ taking values in ${\rm SO}(1,3)$. Each discretized action is a sum of face contributions $S_f[B,\connection]= s \Tr{B(X_1,X_{2}) \curvaturematrix_f}$, where $s=\sgn(\vol(X_1,X_{2},N_1,N_2))$.}
\label{fig:discretization_of_the_action}
\end{figure}
Being given a dual face $f$ we construct a curve $\gamma_{f}$ in $\manifold$ encircling the defect $\simplex_f$ (see figure \ref{fig:path}). For each $n$-simplex $\simplex_k$ in the sequence $f$ we construct two edges: one incoming to the barycenter of the $n$-simplex
$$
\ine_k(t)=\phi_k^{-1}\left( t\, b(|\simplex_k|) + (1-t)\, b(|\simplex_k\cap \simplex_{k-1}|_k)\right), \quad 0\leq t \leq 1,
$$
and one outgoing from the barycenter of the $n$-simplex:
$$
\oute_k(t)=\phi_k^{-1}\left( t\, b(|\simplex_k\cap \simplex_{k+1}|_k) + (1-t)\, b(|\simplex_k|)\right), \quad 0\leq t \leq 1.
$$
Let us define curves $s_{k+1,k}:[0,1]\to \manifold$, $k\in \mathbb{Z}_N$ as a compositions of $\oute_k$ with $\ine_{k+1}$:
$$
s_{k+1,k}=\ine_{k+1} \circ \oute_k.
$$
The loop $\gamma_{f}$ is a composition of such curves:
$$
\gamma_{f}=s_{1,N}\circ s_{N,N-1} \circ \ldots \circ s_{3,2}\circ s_{2,1}.
$$
Clearly $\gamma_{f}$ and $\gamma_{f^{-1}}$ are inverses of each other. 

We introduce the following vectors tangent to $\manifold$ at the barycenter of $\simplex_1$, i.e. at a point $p=\phi_1^{-1}(b(\gsimplex[1]))$:
$$
X_l(h):=\frac{d}{dt}h(\phi_1^{-1}(t(\sigma_1(v_l)-\sigma_1(v_0))+b(\gsimplex[1]) ))|_{t=0},
$$
where $v_0, v_l, l\in\{1,2,\ldots,n-2\}$ are the vertices of $\simplex_f$ and $h$ is a smooth function on some open neighbourhood of $p$. We also consider two vectors $N_1, N_2 \in T_p \manifold:$
\begin{align*}
N_1:=\dot{\oute}_1(0),\quad N_2(h):=-\dot{\ine}_1(1).
\end{align*}

From now on we focus on the physical case, when the pseudomanifold $\complex$ is \underline{4-dimensional} and the metric $\intmetric$ is the Minkowski metric
$$\intmetric=\diag(-1,1,1,1).$$ Let $\bundle$ be an SO($1,3$) principal fibre bundle over $\manifold$. We define an element of the so($1,3$) lie algebra $\curvaturematrix_{f}$ such that
$$
\exp(\curvaturematrix_{f}):=\begin{cases}U_{\gamma_f},{\rm\,if\ }U_{\gamma_f}\in{\rm SO}(1,3)^+,\\
-U_{\gamma_f},{\rm\, otherwise}.
\end{cases}
$$
where $U_{\gamma_f}$ is the SO($1,3$) matrix corresponding to the holonomy around the loop $\gamma_f$ in the trivialization over $U_1$. It will serve as a discretization of the curvature. Clearly, $\curvaturematrix_{f}$ depends on the choice of orientation of $\gamma_f$: $\curvaturematrix_{f^{-1}}=-\curvaturematrix_{f}$. We define the contribution to the action functional from the dual face in the following way:
$$
S_{f}[\B,\connection]= s \Tr{ B_1(X_1, X_{2}) \curvaturematrix_{f}},
$$
where $s=\sgn(\vol(X_1,X_{2},N_1,N_2))$. Clearly, $S_f$ is SO($1,3$) gauge invariant.  As we noted before for each $n$-simplex $\simplex_i\in {\complex}^{(n)}$ and its $(n-2)$-dimensional face $\simplex_f\isfaceof \simplex_i$ there correspond two sequences $f$ and $f^{-1}$. Let us note that $S_f$ does not depend on the orientation of the dual face $f$, i.e.
$$
S_{f}[B,\connection]=S_{f^{-1}}[B,\connection].
$$ 
If $f$ is a sequence of length $N$, then to the $(n-2)$-simplex $\simplex_f$ there correspond $2N$ different dual faces $f_1,\ldots,f_{2N}$. We define
\begin{equation*}
S_{\simplex_f}[B,\connection]:=\frac{1}{2N}\sum_{k=1}^{2N} S_{f_k}[B,\connection].
\end{equation*}
The action functional corresponding to $\complex$ is a sum of the functionals $S_{\simplex_f}$ (see also \cite{ThiemannBook,WielandDiscrete}):
$$
S_{\complex}[B,\connection]=\sum_{\simplex_f\in {\complex}^{(n-2)}} S_{\simplex_f}[B,\connection].
$$
We consider a regular refinement $\complex'_m$ of $\complex$. Since the manifold $\manifold$ corresponding to $\complex$ and $\manifold'$ corresponding to $\complex'_m$ can be identified $\bundle$ is also a principal fibre bundle over $\manifold'$. Our functional is a continuum limit of these discrete actions:
\begin{equation}\label{eq:action}
S[B,\connection]=\lim_{m\to \infty} S_{\complex'_m}[B,\connection].
\end{equation}
Since each $S_{f}$ is SO($1,3$) gauge invariant, so is $S$.

\section{Euler-Lagrange field equations}
\subsection{Boundary conditions}\label{sc:boundarydata}
We assume the following boundary conditions:
\begin{enumerate}
\item Each form $\phi_i^* \B_i,\, \phi_i^* \connection_i$ defined on $\phi_i(U_i)\subset \mathbb{R}^4$ extends smoothly to some neighborhood containing $\overline{\phi_i(U_i)}$; let us denote these extensions by $\tilde{\B}_i, \tilde{\connection}_i$.
\item We assume that the principal fibre bundle $\bundle$ is flat (i.e. supports a flat connection). We choose a trivialization in which the transfer functions are constant and we assume that for each 4-simplex $\simplex_i$ there exists a map $\mu_i:\simplex_i\to \mathbb{R}^n$ assigning a set of $n+1$ independent points in $\mathbb{R}^n$ to the set of vertices of a simplex $\simplex_i$, such that:
\begin{enumerate}
\item Each vector $\mu_i(v)-\mu_i(w)$, where $v,w \in \simplex_i$, is space-like, $v,w\in \simplex_i$. 
\item The orientation of the geometric simplex $\gsimplexm{\simplex_i}{i}:=\conv(\mu_i(\simplex_i))$ agrees with the orientation of $\mathbb{R}^n$:
$$
\vol[\complex]([v_0,v_1,v_2,v_3,v_4])={\rm sgn}(\epsilon_{I_1 I_2 I_3 I_4} (\mu_i(v_1)-\mu_i(v_0))^{I_1} \ldots (\mu_i(v_4)-\mu_i(v_0))^{I_4}).
$$
\item Being given a dual face $f=(\simplex_1,\ldots,\simplex_N)$, we fix an orientation of each $(n-2)$-simplex $\gsimplexm{\simplex_f}{i}:=\conv(\mu_i(\simplex_f))$: consider two constant vector fields $N_{1;i}$ and $N_{2;i}$ on $\mathbb{R}^4$ whose components are
$$N_{1;i}^I(p)=b(\gsimplexm{\simplex_i\cap \simplex_{i+1}}{i})^I-b(\gsimplexm{\simplex_i}{i})^I,$$ $$N_{2;i}^I(p)=b(\gsimplexm{\simplex_i\cap \simplex_{i-1}}{i})^I-b(\gsimplexm{\simplex_{i}}{i})^I.$$ 
The orientation $\gsimplexm{\simplex_f}{i}$ is defined by a volume form $\vol[\gsimplexm{\simplex_f}{i}]=\iota_{\gsimplexm{\simplex_f}{i}}^*(N_2 \lrcorner N_1\lrcorner \vol[4]),$
where $\vol[4]=dx^0\wedge \ldots \wedge dx^{3}$; 
$$\B_{f;i}^{IJ}=\int_{\gsimplexm{\simplex_f}{i}} \tilde{B}_i^{IJ}=\frac{1}{2}\tensor{\epsilon}{^{IJ}_{KL}} (\mu_{i}(v_1)-\mu_i(v_0))^K (\mu_{i}(v_2)-\mu_i(v_0))^L,$$
where $v_0,v_1,v_2$ are the vertices of $\simplex_f$ such that
$$
\epsilon_{IJKL} (\mu_{i}(v_1)-\mu_i(v_0))^I (\mu_{i}(v_2)-\mu_i(v_0))^J N_{1;i}^K(p) N_{2;i}^L(p) >0
$$
for arbitrary $p\in \mathbb{R}^n$.
\end{enumerate} 
\end{enumerate}
\begin{rmk}\label{rmk:faceorientation}
Let us note that the simplices $|\simplex_f|_i$ for $i$ ranging in the set of 4-simplices sharing $\simplex_f$ have consistent orientation, in the sense that if 
$$
\B_{f;i}^{IJ}=\frac{1}{2}\tensor{\epsilon}{^{IJ}_{KL}} (\mu_{i}(v_1)-\mu_i(v_0))^K (\mu_{i}(v_2)-\mu_i(v_0))^L
$$
then
$$
\B_{f;j}^{IJ}=\frac{1}{2}\tensor{\epsilon}{^{IJ}_{KL}} (\mu_{j}(v_1)-\mu_j(v_0))^K (\mu_{j}(v_2)-\mu_j(v_0))^L.
$$
In order to show it, let us note that the condition 
$$
\epsilon_{IJKL} (\mu_{i}(v_1)-\mu_i(v_0))^I (\mu_{i}(v_2)-\mu_i(v_0))^J N_{1;i}^K(p) N_{2;i}^L(p) >0
$$
is equivalent to 
$$
\epsilon_{IJKL} (\mu_{i}(v_1)-\mu_i(v_0))^I (\mu_{i}(v_2)-\mu_i(v_0))^J (\mu_{i}(v)-\mu_i(v_0))^K (\mu_{i}(v_3)-\mu_i(v_0))^L >0,
$$
where $\simplex_{i-1}\cap \simplex_{i}=\{ v_0,v_1,v_2,v\}$,  $\simplex_{i}\cap \simplex_{i+1}=\{ v_0,v_1,v_2,v_3\}$. Therefore $\vol[\complex]([v_0,v_1,v_2,v,v_3])>0$. From orientability condition of the simplicial complex, it follows, that $\vol[\complex]([v_0,v_1,v_2,v_3,w])>0$, where $\simplex_{i+1}=\{v_0,v_1,v_2,v_3,w\}$. Therefore
$$
\epsilon_{IJKL} (\mu_{i+1}(v_1)-\mu_{i+1}(v_0))^I (\mu_{i+1}(v_2)-\mu_{i+1}(v_0))^J (\mu_{i+1}(v_3)-\mu_i(v_0))^K (\mu_{i+1}(w)-\mu_{i+1}(v_0))^L >0.
$$
This equation is equivalent to
$$
\epsilon_{IJKL} (\mu_{i+1}(v_1)-\mu_{i+1}(v_0))^I (\mu_{{i+1}}(v_2)-\mu_{i+1}(v_0))^J N_{1;i+1}^K(p) N_{2;i+1}^L(p) >0.
$$
\end{rmk}

Although the constraints on the $B_{f;i}^{IJ}$ are imposed on each simplex separately, the geometries of the simplices match thanks to the implicit conditions
$$
\B_{f;i}^{IJ}=\tensor{g}{_{ij}^I_K} \tensor{g}{_{ij}^J_L} \B_{f;j}^{KL}.
$$
This is proved in the following theorem.
\begin{thm}\label{thm:geometricstructure}
The family of maps $\{\mu_i\}_{i\in \{1,\ldots,N}$ forms a geometric structure on $\complex$.
\end{thm}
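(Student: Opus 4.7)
To establish that $\{\mu_i\}$ is a geometric structure I need to verify three items: (i) each $\mu_i$ sends $\simplex_i$ to $n{+}1$ affinely independent points in $\mathbb{R}^n$; (ii) each $\gsimplexm{\simplex_i}{i}$ inherits the standard orientation of $\mathbb{R}^n$; and (iii) the family is pairwise compatible. Items (i) and (ii) are granted directly by boundary conditions 2 and 2(b). The plan is therefore to focus on (iii). Compatibility is trivial when $\dim(\simplex_i\cap\simplex_j)<n-1$, so I reduce to a pair $\simplex_i,\simplex_j$ sharing an $(n-1)$-face $\simplex_{ij}$ and must exhibit an orientation-preserving affine isometry $\alpha_{ij}$ of $\mathbb{R}^n$ with $\alpha_{ij}(\mu_j(v))=\mu_i(v)$ for every $v\in \simplex_{ij}$.

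Fix a vertex $v_0\in\simplex_{ij}$ and take as the candidate $\alpha_{ij}(x):=\tmap x+\mu_i(v_0)-\tmap \mu_j(v_0)$, whose linear part is the constant transfer function $\tmap\in{\rm SO}(\intmetric)^+$ of the flat bundle $\bundle$. By construction $\alpha_{ij}$ is orientation-preserving and sends $\mu_j(v_0)$ to $\mu_i(v_0)$; what remains is to verify $\mu_i(v)-\mu_i(v_0)=\tmap(\mu_j(v)-\mu_j(v_0))$ for the other three shared vertices.

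The hook I plan to use is the implicit transformation law
$$\B_{f;i}^{IJ}=\tensor{g}{_{ij}^I_K}\tensor{g}{_{ij}^J_L}\B_{f;j}^{KL},$$
which follows from $\B$ being a globally defined $\abundle$-valued form in a trivialization with constant transition $\tmap$. Substituting the explicit formula of boundary condition 2(c) on both sides and using the ${\rm SO}(\intmetric)$-equivariance of $\epsilon_{IJKL}$ (determinant $+1$), each $2$-face $\{v_0,v_1,v_2\}\isfaceof\simplex_{ij}$ yields the wedge identity
$$(\mu_i(v_1)-\mu_i(v_0))\wedge(\mu_i(v_2)-\mu_i(v_0))=\tmap(\mu_j(v_1)-\mu_j(v_0))\wedge\tmap(\mu_j(v_2)-\mu_j(v_0)).$$
Writing $e_k:=\mu_i(v_k)-\mu_i(v_0)$ and $e'_k:=\tmap(\mu_j(v_k)-\mu_j(v_0))$ for the three non-basepoint shared vertices $v_k$, the three pairwise wedge equalities localize each $e'_k$ to $\operatorname{span}(e_k)$, being the $1$-dimensional intersection of the two $2$-planes $\operatorname{span}(e_k,e_l)$, $l\neq k$, inside the common $3$-plane spanned by the three vectors; consequently $e'_k=\lambda e_k$ with a single sign $\lambda\in\{+1,-1\}$ independent of $k$.

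The decisive step, which I expect to be the main obstacle, is pinning $\lambda=+1$. For this I plan to combine (a) orientability of $\complex$, which forces the chain coefficients $\vol[\complex]([\simplex_i])$ and $\vol[\complex]([\simplex_j])$ to have opposite signs so that the boundary contributions on $\simplex_{ij}$ cancel; (b) boundary condition 2(b) applied to both $\simplex_i$ and $\simplex_j$, translating this into opposite signs of $\epsilon_{IJKL}e_1^I e_2^J e_3^K e_{i^*}^L$ and $\epsilon_{IJKL}f_1^I f_2^J f_3^K f_{j^*}^L$, where $v_i^*$ and $v_j^*$ are the unique vertices of $\simplex_i$, $\simplex_j$ not belonging to $\simplex_{ij}$; and (c) $\det \tmap=+1$. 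A short bookkeeping calculation, parallel to the reasoning of Remark \ref{rmk:oppositesides} for the $\sigma$-structure, then shows that $\lambda=-1$ would place $\mu_i(v_i^*)$ and $\alpha_{ij}(\mu_j(v_j^*))$ on the same side of the affine hyperplane through $\mu_i(\simplex_{ij})$ (the space-like assumption 2(a) guaranteeing this is a genuine separating hyperplane in $\mathbb{R}^4$), contradicting the opposite-side conclusion enforced by orientability. Hence $\lambda=+1$, the candidate $\alpha_{ij}$ identifies $\mu_j(v)$ with $\mu_i(v)$ on every shared vertex, and pairwise compatibility of $\{\mu_i\}$ is established.
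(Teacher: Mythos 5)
Your reconstruction step is correct and in fact more elementary than the paper's: from the transformation law $\B_{f;i}=\tensor{g}{_{ij}^I_K}\tensor{g}{_{ij}^J_L}\B_{f;j}^{KL}$ applied to the three triangles of $\simplex_i\cap\simplex_j$ containing the base vertex (with the consistency of the vertex orderings supplied by Remark \ref{rmk:faceorientation}, which you use tacitly), you get $e'_k=\lambda e_k$ with a uniform $\lambda\in\{\pm1\}$ by intersecting $2$-planes. The paper instead normalizes the two tetrahedra by boosts $h_i,h_j$ sending their outward time-like normals to $\pm(1,0,0,0)$, reduces to a Euclidean problem in $\mathbb{R}^3$, and invokes Minkowski's theorem on convex polyhedra (faces reconstructed from area-weighted outward normals) plus a barycentre argument to kill the translation; your bivector argument replaces all of that and is cleaner.

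The gap is in your decisive step. The ingredients you list --- the transformation law of $\B$, conditions 2(a)--(c), orientability of $\complex$, and $\det \tmap=+1$ --- cannot pin down $\lambda$, because they are all invariant under $\tmap\mapsto -\tmap$: the matrix $-\identity$ lies in ${\rm SO}(1,3)$ (it is the non-orthochronous $PT$ element, with determinant $+1$ in four dimensions) and acts trivially on the bivectors, $(-\tmap)\otimes(-\tmap)\B_{f;j}=\tmap\otimes\tmap\,\B_{f;j}$, while it flips $\lambda$. Concretely, your claimed contradiction is not one: Remark \ref{rmk:oppositesides} derives the opposite-side property as a \emph{consequence} of $\alpha_{ij}$ already satisfying \eqref{eq:gluingpoints}, so for $\lambda=-1$ your candidate map simply fails to satisfy \eqref{eq:gluingpoints} and the remark says nothing about where it sends $\mu_j(w_0)$; no hypothesis constrains the image of the non-shared vertex under a map that is not yet known to be the gluing map. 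The paper's sign-fixing uses a genuinely additional input, namely the orientation prescription of condition 2(c) relative to the dual-face directions $N_{1;i},N_{2;i}$, which ties the orientation of each triangle to its position inside the tetrahedron and thereby distinguishes outward from inward face normals. That said, the theorem as stated only asserts \emph{existence} of a compatible family, and your argument already delivers it with a one-line repair: having shown $\tmap(\mu_j(v_k)-\mu_j(v_0))=\lambda(\mu_i(v_k)-\mu_i(v_0))$ with a single $\lambda=\pm1$, take the linear part of $\alpha_{ij}$ to be $\lambda\tmap$, which is still in ${\rm SO}(1,3)$ and still orientation-preserving, and matches all four shared vertices. If, as in the rest of the paper, you need the linear part to be exactly $\tmap$, you must import the orientation bookkeeping of 2(c); your current ingredients provably do not suffice.
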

\begin{proof}
It is enough to show that the maps $\mu_i$ are pairwise compatible. Let $\translation_v$ be a translation in $\mathbb{R}^4$ by a vector $v$:
$$
\translation_v(w)=w+v.
$$
Let us focus on two neighbouring 4-simplices $\simplex_i$ and $\simplex_j$. We claim that 
$$
\alpha_{ij}=\translation_{b(\gsimplexm{\simplex_i\cap\simplex_j}{i})}\, g_{ij}\, \translation_{-b(\gsimplexm{\simplex_i\cap\simplex_j}{j})}
$$
is the isometry of the Minkowski space $(\mathbb{R}^4,\intmetric)$ satisfying \eqref{eq:gluingpoints}. We will now prove this claim. Let $N_{ij}, N_{ji}\in \mathbb{R}^4$ be normalized vectors orthogonal to the boundary triangles of the tetrahedron $\gsimplexm{\simplex_i\cap \simplex_j}{j}$ and $\gsimplexm{\simplex_i\cap \simplex_j}{i}$, respectively. The vectors are unique up to transformations $N_{ij}\mapsto -N_{ij}$, $N_{ji}\mapsto -N_{ji}$. We fix this ambiguity by requiring that 
\begin{equation}\label{eq:orientationi}\epsilon_{IJKL} (\mu_j(v_1)-\mu_j(v_0))^I (\mu_j(v_2)-\mu_j(v_0))^J (\mu_j(v_3)-\mu_j(v_0))^K N_{ij}^L <0\end{equation}
and
\begin{equation}\label{eq:orientationii}\epsilon_{IJKL} (\mu_i(v_1)-\mu_i(v_0))^I (\mu_i(v_2)-\mu_i(v_0))^J (\mu_i(v_3)-\mu_i(v_0))^K N_{ji}^L >0,\end{equation}
where $\epsilon_{IJKL}$ is the Levi-Civita symbol, $\epsilon_{0123}=1$, $\simplex_i=\{v_0,v_1,v_2,v_3,v\}, \simplex_j=\{v_0,v_1,v_2,v_3,w\}$ and $\vol[\complex]([v_0,v_1,v_2,v_3,v])>0,\, \vol[\complex]([v_0,v_1,v_2,v_3,w])<0 $. In other words $N_{ij}$ is an outward pointing normal to the tetrahedron $\gsimplexm{\simplex_i \cap \simplex_j}{j}$ and $N_{ji}$ is an outward pointing normal to the tetrahedron $\gsimplexm{\simplex_i\cap \simplex_j}{i}$. Consider two SO(1,3) transformations $h_i,\,h_j$, mapping the vectors $N_{ij}$ and $N_{ji}$ into $(1,0,0,0)$ and $(-1,0,0,0)$, respectively. Define maps $\mu'_i:\simplex_i\to \mathbb{R}^4$, $\mu'_j:\simplex_j\to \mathbb{R}^4$ by the following formula
$$
\mu'_i(v) = h_i (\mu_i(v) - b(\gsimplexm{\simplex_i\cap\simplex_j}{i})),\quad \mu'_j(v) = h_j (\mu_j(v) - b(\gsimplexm{\simplex_i\cap\simplex_j}{j})).
$$
The maps $\mu'_i$ and $\mu'_j$ are obtained from $\mu_i$ and $\mu_j$ by composing with affine isometries. Therefore the relation can be easily inverted. The conditions \eqref{eq:orientationi} and \eqref{eq:orientationii} become:
\begin{equation}\label{eq:orientationij}
\epsilon_{abc} (\mu'_j(v_1)-\mu'_j(v_0))^a (\mu'_j(v_2)-\mu'_j(v_0))^b (\mu'_j(v_3)-\mu'_j(v_0))^c >0
\end{equation}
and
\begin{equation}\label{eq:orientationji}
\epsilon_{abc} (\mu'_i(v_1)-\mu'_i(v_0))^a (\mu'_i(v_2)-\mu'_i(v_0))^b (\mu'_i(v_3)-\mu'_i(v_0))^c >0,
\end{equation}
where $a,b,c\in\{1,2,3\}$, $\epsilon_{123}=1$.
Let us note that
$$
\mu_i(v)=\alpha_{ij}(\mu_j(v)) \iff  \mu_i'(v)=\alpha'_{ij}(\mu_j'(v)),
$$
where $\alpha'_{ij}=h_i g_{ij} h_j^{-1}$. Therefore, in order to check the property \eqref{eq:gluingpoints} it is enough to check the primed version from the equation above. Let us note that
$$
\B_{f;i}^{IJ}=\tensor{g}{_{ij}^I_K} \tensor{g}{_{ij}^J_L} \B_{f;j}^{KL}.
$$
Since the normal vectors $N_{ij}$ and $N_{ji}$ are defined uniquely up to overall $\pm1$ factor by equations:
\begin{equation}\label{eq:orthogonality}
N_{ji}^J \epsilon_{IJKL} B_{f;i}^{KL}=0,\quad N_{ij}^J \epsilon_{IJKL} B_{f;j}^{KL}=0
\end{equation}
it follows that $N_{ji}=\pm g_{ij} N_{ij}$. As a result, $\alpha'_{ij}$ is a linear isometry of $\mathbb{R}^4$ which maps vectors orthogonal to $(1,0,0,0)$ into vectors orthogonal to $(1,0,0,0)$. Therefore, when restricted to $\mathbb{R}^3$, it is an orthogonal transformation. Denote by
$$
{\B'}_{f;i}^{IJ} := \tensor{(h_i)}{^I_K} \tensor{(h_i)}{^J_L} {\B}_{f;i}^{KL}, \quad {\B'}_{f;j}^{IJ} := \tensor{(h_j)}{^I_K} \tensor{(h_j)}{^J_L} {\B}_{f;j}^{KL}.
$$
In particular (see remark \ref{rmk:faceorientation}):
\begin{eqnarray}
{\B'}_{f;i}^{IJ} = \frac{1}{2}\tensor{\epsilon}{^{IJ}_{KL}} (\mu_{i}'(w_1)-\mu_i'(w_0))^K (\mu_{i}'(w_2)-\mu_i'(w_0))^L,\label{eq:simplexfpi} \\ {\B'}_{f;j}^{IJ} = \frac{1}{2}\tensor{\epsilon}{^{IJ}_{KL}} (\mu_{j}'(w_1)-\mu_j'(w_0))^K (\mu_{j}'(w_2)-\mu_j'(w_0))^L,\label{eq:simplexfpj}
\end{eqnarray}
where $\simplex_f=\{w_0,w_1,w_2\}$. Denote also by
$$
\gsimplexmp{\simplex_i\cap\simplex_j}{k}= \conv (\mu'_k(\simplex_i\cap\simplex_j)),\quad \gsimplexmp{\simplex_f}{k}:=\conv (\mu'_k(\simplex_f)), k\in\{i,j\}.
$$
From \eqref{eq:orthogonality} it follows that 
$$
{\B'}_{f;i}^{ab}={\B'}_{f;j}^{ab}=0,\ a,b\in \{1,2,3\}.
$$
Consider vectors
$$
K_{f;k}^a:=\begin{cases}{\B'}_{f;k}^{0a},\textrm{\,if the orientation of }\gsimplexmp{\simplex_f}{k}\textrm{ agrees with the orientation of }\gsimplexmp{\simplex_i\cap\simplex_j}{k},\\ -{\B'}_{f;k}^{0a},\textrm{\,if the orientation of }\gsimplexmp{\simplex_f}{k}\textrm{ is opposite to the orientation of }\gsimplexmp{\simplex_i\cap\simplex_j}{k},\end{cases}
$$
where $k=i$ or $k=j$. The vectors $K_{f;i}^a$ and $K_{f;j}^a$ are outward pointing normals to the boundary triangles of the tetrahedra $\gsimplexmp{\simplex_i\cap\simplex_j}{i}$ and $\gsimplexmp{\simplex_i\cap\simplex_j}{j}$ respectively. From equations \eqref{eq:orientationij}, \eqref{eq:orientationji}, \eqref{eq:simplexfpi} and \eqref{eq:simplexfpj} it follows that the orientation of $\gsimplexmp{\simplex_f}{i}$ agrees with the orientation of $\gsimplexmp{\simplex_i\cap\simplex_j}{i}$ if and only if the orientation of $\gsimplexmp{\simplex_f}{j}$ agrees with the orientation of $\gsimplexmp{\simplex_i\cap\simplex_j}{j}$. As a result the map $\alpha_{ij}'$ maps the outward pointing normals to the faces of the tetrahedron $\gsimplexmp{\simplex_i\cap\simplex_j}{i}$ into outward pointing normals to the faces of the tetrahedron $\gsimplexmp{\simplex_i\cap\simplex_j}{j}$. Being an orthogonal transformation, it preserves the lengths of vectors:
$$
|| \vec{K}_{f;i}|| = || \vec{K}_{f;j}||,
$$
Therefore the area of a triangle $\gsimplexmp{\simplex_f}{i}$ is the same as the area of the triangle $\gsimplexmp{\simplex_f}{j}$. From Minkowski theorem about convex polyhedra \cite{Minkowski} it follows that $\alpha_{ij}'(\gsimplexmp{\simplex_i\cap\simplex_j}{j})$ differs from $\gsimplexmp{\simplex_i\cap\simplex_j}{i}$  possibly by a translation. However, $\alpha_{ij}'$ maps a barycenter of $\gsimplexmp{\simplex_i\cap\simplex_j}{j}$ into barycenter of $\gsimplexmp{\simplex_i\cap\simplex_j}{i}$. Therefore 
\begin{equation}\label{eq:equalityoftetrahedra}
\gsimplexmp{\simplex_i\cap\simplex_j}{i} = \alpha_{ij}'(\gsimplexmp{\simplex_i\cap\simplex_j}{j}).
\end{equation}
Since it maps a normal to a face $\gsimplexmp{\simplex_f}{j}$, to a normal of a face $\gsimplexmp{\simplex_f}{i}$, it maps the points of $\gsimplexmp{\simplex_f}{j}$ to the points of $\gsimplexmp{\simplex_f}{i}$, i.e.
\begin{equation}\label{eq:equalityoffaces}
 \gsimplexmp{\simplex_f}{i}=\alpha'_{ij}(\gsimplexmp{\simplex_f}{j}).
\end{equation}
For each $v\in\simplex_i\cap\simplex_j$ there is unique triangle $\simplex_f\subset \simplex_i\cap \simplex_j$ such that $v\not\in \simplex_f$. From equations \eqref{eq:equalityoftetrahedra} and \eqref{eq:equalityoffaces} it follows that 
$$
\mu_i'(v)=\alpha'_{ij} (\mu_j'(v)).
$$ 
\end{proof}
\subsection{The Euler-Lagrange field equations for the BF theory on manifolds with defects}\label{sc:ELBFdefects}
We will now derive the Euler-Lagrange field equations characterizing the stationary points of the action functional \eqref{eq:action}. We will consider variations of the $B$ field preserving these boundary conditions, i.e. variations $\delta \connection, \delta B$ such that $\delta B$ vanishes at the boundary. We do not assume vanishing of $\delta \connection$ at the boundary. We require only that each local Lie-algebra-valued 1-form $\phi^*_i\delta \connection_i$ extends smoothly to some neighborhood containing $\overline{\phi_i(U_i)}$.

We split the set of $(n-2)$-simplices of a subdivision $\complex'_m$ of $\complex$ into two disjoint sets $${\complex'}^{(n-2)}_m={\complex'}^{\rm interior}_m\cup {\complex'}^{\rm boundary}_m,$$
where ${\complex'}^{\rm boundary}_m$ consists of those $(n-2)$-simplices of $\complex'_m$ that are contained in $(n-2)$-simplices of $\complex$, i.e. for each $(n-2)$-simplex $\simplex'_{f'}\isfaceof \simplex'_{i'}$ there exists an $(n-2)$-simplex $\simplex_f\isfaceof \simplex_i$ such that
$$
\varphi(|\simplex'_{f'}|_{i'})\subset |\simplex_f|_i.
$$

The action functional $S_{\complex'_m}[\B,\connection]$ can be split into a sum of two contributions 
$$S_{\complex'_m}[\B,\connection]=S^{\rm interior}_{\complex'_m}[\B,\connection]+S^{\rm boundary}_{\complex'_m}[\B,\connection],$$
where 
$$
S^{\rm boundary}_{\complex'_m}[\B,\connection]:=\sum_{\simplex'_{f'}\in {\complex'}^{\rm boundary}_m} S_{\simplex'_{f'}}[\B,\connection].
$$
This splitting of the functional $S_{\complex'_m}$ leads to a splitting of the action functional \eqref{eq:action}:
$$
S[\B,\connection]=S^{\rm interior}[\B,\connection]+S^{\rm boundary}[\B,\connection],
$$
where 
$$
S^{\rm interior}[\B,\connection]=\lim_{m\to \infty} S^{\rm interior}_{\complex'_m}[\B,\connection]
$$
and
$$
S^{\rm boundary}[\B,\connection]=\lim_{m\to \infty} S^{\rm boundary}_{\complex'_m}[\B,\connection].
$$
We will assume that the fields $\B$ and $\connection$ are such that the form $\Tr{\B\wedge \curvature}$ is Lebesgue integrable. With this assumption the part  $S^{\rm interior}[\B,\connection]$ can be expressed as an integral\footnote{Let us notice that Riemann integrability would be not sufficient.}:
$$
S^{\rm interior}[\B,\connection]=\int_{\manifold}\Tr{\B\wedge \curvature}.
$$
We calculate the Euler-Lagrange field equations. First, let us notice that
\begin{thm}
$$
S^{\rm boundary}[\B+\delta B,\connection+\delta \connection] = S^{\rm boundary}[\B,\connection]
$$
for any variations $\delta B$ and $\delta \connection$ such that $\delta B=0$ at the boundary.
\end{thm}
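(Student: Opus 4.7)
The key conceptual point is that $S^{\rm boundary}[\B,\connection]$ — despite being defined by a refinement limit that a priori uses $\B$ and $\connection$ everywhere on $\manifold$ — is in fact determined solely by the integrals of $\B$ over the defects of $\complex$ and by the monodromies of the flat bundle $\bundle$ around those defects. My plan is to compute the refinement limit explicitly and then read off the invariance.

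Fix a regular refinement $\complex'_m$. Each $\simplex'_{f'}\in{\complex'}^{\rm boundary}_m$ lies inside some defect $\simplex_f\in\complex^{(n-2)}$, and the associated dual face yields a loop $\gamma_{f'}$ of perimeter $\bigo(\epsilon_m)$ encircling $\simplex'_{f'}$. In the trivialization provided by the boundary conditions, where the transition functions $g_{ij}$ are constant, the holonomy around $\gamma_{f'}$ factorizes as an alternating product of parallel transports along open segments of length $\bigo(\epsilon_m)$ — each of the form $\identity+\bigo(\epsilon_m)$ by smoothness of $\connection$ on $\overline{\phi_i(U_i)}$ — and transition functions $g_{i_k i_{k+1}}$ across $\complex$-chart boundaries. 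Collecting the factors gives
$$
U_{\gamma_{f'}}=U^{(0)}_{\simplex_f}+\bigo(\epsilon_m),\qquad \curvaturematrix_{f'}=\curvaturematrix^{(0)}_{\simplex_f}+\bigo(\epsilon_m),
$$
where $U^{(0)}_{\simplex_f}=\prod_k g_{i_k i_{k+1}}$ is the monodromy of $\bundle$ around $\simplex_f$, which depends on the bundle alone and not on $\connection$. At the same time, since $\B_1(X_1,X_2)$ with $X_1,X_2$ of length $\bigo(\epsilon_m)$ approximates twice the integral of $\B$ over $\simplex'_{f'}$, the partial sums organize into a Riemann sum converging to $2\sum_{\simplex_f}\int_{\simplex_f}\B$. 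Using the regularity bound $N_{\simplex,m}=\bigo(\epsilon_m^{-(n-2)})$ to pair the $\bigo(\epsilon_m)$ holonomy error with the $\bigo(\epsilon_m^2)$ factor from $\B_1(X_1,X_2)$, the cumulative error is $\bigo(\epsilon_m)$, and one obtains
$$
S^{\rm boundary}[\B,\connection]=2\sum_{\simplex_f\in\complex^{(n-2)}}\Tr{\left(\int_{\simplex_f}\B\right)\curvaturematrix^{(0)}_{\simplex_f}},
$$
a functional of boundary data alone.

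The theorem now follows at once: $\delta\B=0$ on each defect $\simplex_f$ forces $\int_{\simplex_f}\delta\B=0$, so $\int_{\simplex_f}\B$ is preserved; and $\curvaturematrix^{(0)}_{\simplex_f}$, depending only on the bundle transition functions, is unaffected by any $\delta\connection$. The step I expect to be the main obstacle is the holonomy analysis — producing the $\bigo(\epsilon_m)$ bound uniformly in $f'$ and in the connection data so that the cumulative error over $\bigo(\epsilon_m^{-(n-2)})$ boundary dual faces still tends to zero. This is precisely where the regular-refinement hypotheses, especially the uniform cap $N_{\rm max}$ on the number of $n$-simplices around each $(n-2)$-simplex, do the essential work.
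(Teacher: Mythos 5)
Your proposal is correct in substance but proceeds along a genuinely different route from the paper. The paper never identifies the limit $S^{\rm boundary}[\B,\connection]$; it works directly with the difference of the discrete functionals, splitting it into the two terms $\pm\,\delta \B(X_1,X_2)\,\curvaturematrix_f(\connection+\delta\connection)$ and $\pm\,\B(X_1,X_2)\bigl(\curvaturematrix_f(\connection+\delta\connection)-\curvaturematrix_f(\connection)\bigr)$, and shows each partial sum is $\bigo(\epsilon)$: the first because $\delta \B$ vanishes on the defect and extends smoothly, so $\delta \B(X_1,X_2)=\bigo(\epsilon^3)$ while $\card\,{\complex'}^{\rm boundary}_m=\bigo(\epsilon^{-2})$; the second via a Baker--Campbell--Hausdorff estimate giving $\curvaturematrix_f(\connection+\delta\connection)-\curvaturematrix_f(\connection)={\mathcal A}\int_{\gamma_f}\delta\connection+\bigo(\epsilon^2)=\bigo(\epsilon)$, paired with $\B(X_1,X_2)=\bigo(\epsilon^2)$. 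You instead compute the limit in closed form as a functional of the face integrals of $\B$ and the monodromies alone, and read off the invariance. The underlying estimates are the same (holonomy equals monodromy up to $\bigo(\epsilon)$, $\B(X_1,X_2)=\bigo(\epsilon^2)$, $\bigo(\epsilon^{-2})$ boundary faces, the uniform cap $N_{\rm max}$ on the length of each dual face), so your route buys more -- it essentially anticipates the on-shell formula $S^{\rm boundary}=\sum_{\simplex_f}\Tr{\B_{f;i}\curvaturematrix_{f;i}}$ derived later -- at the price of extra obligations the paper's difference argument avoids: you must justify that the matrix logarithm is well defined and Lipschitz near $\pm U^{(0)}_{\simplex_f}$ so that $\curvaturematrix_{f'}=\curvaturematrix^{(0)}_{\simplex_f}+\bigo(\epsilon)$, and you must assemble the Riemann sum consistently even though the base simplex (hence the trivialization in which $\B_1(X_1,X_2)$ and $\curvaturematrix_{f'}$ are expressed) changes from one dual face to the next, which requires invoking the adjoint transformation law and the gauge invariance of the trace; also note that ``$\int_{\simplex_f}\B$'' must be read per chart as $\int_{|\simplex_f|_i}\tilde{\B}_i$, since $\B$ itself is not defined on the removed defect. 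Your overall normalization differs from the paper's by a constant factor, but this is immaterial to the invariance statement.
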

\begin{proof} To show this notice that
\begin{align*}
S^{\rm boundary}_{\complex'_m}[\B+\delta B,\connection+\delta \connection]-S^{\rm boundary}_{\complex'_m}[\B,\connection]=\\=\sum_{\simplex'_{f'}\in {\complex'}^{\rm boundary}_m}\left( \pm\delta B (X_1,X_{2}) \curvaturematrix_f(\connection+\delta \connection) \pm  B (X_1,X_{2}) (\curvaturematrix_f(\connection+\delta \connection) - \curvaturematrix_f(\connection)) \right).
\end{align*}
Let $\epsilon=\mesh(\complex'_m)$. We notice that $X_1,X_2=\bigo(\epsilon)$. Since $\delta B$ vanishes at the boundary, it follows that $\delta B(X_1,X_2)=\bigo(\epsilon^3)$. On the other hand $\curvaturematrix_f=\bigo(1)$. From the assumptions about the subdivision stated in Section \ref{sc:action_functional} it follows that $\# {\complex'}^{\rm boundary}_m = \bigo(\epsilon^{-2})$. Therefore 
$$
\sum_{\simplex'_{f'}\in {\complex'}^{\rm boundary}_m}\delta B (X_1,X_{2}) \curvaturematrix_f=\bigo(\epsilon).
$$
Now we will show that $\curvaturematrix_f(\connection+\delta \connection) - \curvaturematrix_f(\connection)=\bigo(\epsilon)$. Let us for simplicity assume that all the transition functions around an $(n-2)$-simplex $\simplex'_{f'}\in {\complex'}^{\rm boundary}_m$ are identity matrices except for $g_{1 N}$ \footnote{It is always possible to use an equivalent fibre bundle such that this holds.}. In this case the holonomy around the loop $\gamma_f$ is simply
$$
U_{\gamma_f}(\connection)=g_{1 N} \mathcal{P}\exp(\int_{\gamma_f} \connection ).
$$
Let us denote by $\curvaturematrix_{1N}$ the Lie algebra element such that $\exp(\curvaturematrix_{1N})=\pm g_{1 N}$. Let us recall that the curve $\gamma_f$ is composed from segments $\ine_{k},\oute_k$. Let us notice that $\dot{\ine}_k(1),\dot{\oute}_k(0)=\bigo(\epsilon)$. From the parallel transport equation it immediately follows that:
\begin{align*}
\mathcal{P}\exp(\int_{\oute_k} \connection )=\mathbbm{1}+ \connection(\dot{\oute}_k(0)) + \bigo(\epsilon^2)=\exp(\connection(\dot{\oute}_k(0)))+\bigo(\epsilon^2)=\\=\exp(\connection(\dot{\oute}_k(0)))(\mathbbm{1}+\bigo(\epsilon^2))=\exp(\connection(\dot{\oute}_k(0))+\ln(\mathbbm{1}+\bigo(\epsilon^2)))=\\=\exp(\connection(\dot{\oute}_k(0))+\bigo(\epsilon^2))=\exp(\int_{\oute_k} \connection + \bigo(\epsilon^2)),
\end{align*}
where in the last equality we used the fact that the left Riemann sum (consisting of only one summand) $\connection(\dot{\oute}_k(0))$ approximates the integral $\int_{\oute_k} \connection$ up to an error of order $\bigo(\epsilon^2)$. Using the Baker-Campbell-Hausdorff formula and the fact that the number of segments $\ine_k,\oute_k$ in a loop $\gamma_f$ is bounded from above, we see that
$$
\mathcal{P}\exp(\int_{\gamma_f} \connection )=\exp(\int_{\gamma_f} \connection + \bigo(\epsilon^2)).
$$
Applying the Baker-Campbell-Hausdorff formula again we obtain
$$
U_{\gamma_f}(\connection)=\pm\exp( \curvaturematrix_{1N} + {\mathcal A} \int_{\gamma_f} \connection + \bigo(\epsilon^2)),
$$
where ${\mathcal A}$ is a linear operator ${\mathcal A}:\mathfrak{g}\to \mathfrak{g}$ not depending on $\epsilon$ defined by iterated adjoint actions of the Lie algebra element $\curvaturematrix_{1N}$. 
Since the (Euclidean) length of the loop $\gamma_f$ is of order $\epsilon$, it follows that $\int_{\gamma_f} \delta \connection = \bigo(\epsilon)$. Thus we conclude that
$$
\curvaturematrix_f(\connection+\delta \connection) - \curvaturematrix_f(\connection)= {\mathcal A}\int_{\gamma_f} \delta \connection + \bigo(\epsilon^2)= \bigo(\epsilon).
$$
Let us notice, that we did not need to assume that $\delta \connection=0$ at the boundary. Since $B(X_1,X_2)=\bigo(\epsilon^2)$ and $\# {\complex'}^{\rm boundary}_m = \bigo(\epsilon^{-2})$, we see that
$$
 \sum_{\simplex'_{f'}\in {\complex'}^{(n-2)}_{\rm boundary}} \pm B (X_1,X_{2}) (\curvaturematrix_f(\connection+\delta \connection) - \curvaturematrix_f(\connection)) = \bigo(\epsilon).
$$
Therefore in the limit $\epsilon\to 0$ we obtain that
$$
S^{\rm boundary}[\B+\delta B,\connection+\delta \connection] - S^{\rm boundary}[\B,\connection]=0.
$$
\end{proof}
An immediate consequence of this theorem is that the stationary points of the action \eqref{eq:action} satisfying the boundary conditions coincide with the stationary points of $S^{\rm interior}[\B,\connection]=\int_{\manifold} \Tr{\B\wedge \curvature}$. In particular, the Euler-Lagrange field equations are:
$$
\D \B=0,\quad \curvature=0
$$
on $\manifold$. Since the solutions of these field equations are flat connections, the part $S^{\rm interior}$ vanishes on the solutions $\mathring{\B},\mathring{\connection}$ of these field equations. We will choose a trivialization in which the transition functions are constant. The remaining contribution comes from the boundary:
$$
S[\mathring{\B},\mathring{\connection}]=S^{\rm boundary}[\mathring{\B},\mathring{\connection}]=\sum_{\simplex_f \in \complex^{(2)}} \Tr{ B_{f;i} \curvaturematrix_{f;i}},
$$
where $\simplex_f$ is the unique triangle corresponding to a face $f$ and $\simplex_i$ is any simplex in the sequence $f$ (due to the transformation properties of $\B_f$ and $U_f$ the expression $\Tr{ B_{f;i} \curvaturematrix_{f;i}}$ does not depend on the choice of simplex $\simplex_i$ in the sequence $f$).
\subsection{The canonical $B$-field and flat connection on a manifold with defects}
Let $\complex$ be 4-dimensional oriented pseudomanifold and let $\mu$ be a Lorentzian geometric structure such that each vector $\mu_i(v)-\mu_i(w), v,w\in \simplex_i$ is space-like. In this section we will show that there exists a flat principal fibre bundle $\bundle$ and solutions of the Euler-Lagrange field equations satisfying boundary conditions defined by $\mu$ (see Section \ref{sc:boundarydata}, in particular Theorem \ref{thm:geometricstructure}). 

We consider the corresponding manifold with defects $\manifold$ and a bundle $L(\manifold)$ of linear frames over $\manifold$. The structure group of $L(\manifold)$ is reducible to O(1,3), because the transition functions $\tmap$ are Jacobi matrices of affine isometries of the Minkowski space $(\mathbb{R}^4,\intmetric)$:
$$
\tmap=\D\alpha_{ij}.
$$  Since $\manifold$ is orientable, the structure group can be further reduced to SO(1,3). The resulting subbundle of oriented orthonormal frames will be our principal fibre bundle $\bundle$. We will denote by ${\rm SO}(1,3)\times_\rho \mathbb{R}^4$ the fibre bundle associated to $\bundle$ via the defining representation $\rho$. Let $\connection$ be a connection form on $\bundle$ and $\triad$ be a ${\rm SO}(1,3)\times_\rho \mathbb{R}^4$-valued 1-form, called a vierbein. It is well known that a connection form defines and is uniquely defined by a family $\{\connection_i\}$ of Lie algebra valued 1-forms each defined on $U_i$ satisfying the following conditions (see for example Chapter 2, Proposition 1.4. in \cite{KobayashiNomizuI}):
$$
\connection_i=\tmap\, \connection_j\, \tmap^{-1} - d\tmap\, \tmap^{-1} {\rm\ on\ }U_i\cap U_j.
$$
Similarly, a vierbein $\triad$ is described by a family $\{\triad_i\}$ of $\mathbb{R}^n$-valued 1-forms each defined on $U_i$ subject to the condition:
$$
\triad_i = \rho(g_{ij}) \triad_j  {\rm\ on\ }U_i\cap U_j.
$$

The canonical vierbein is given by:
\begin{align*}
\mathring{e}_i^I=dx^I, {\rm\ if\ } dx^1\wedge dx^2\wedge dx^3\wedge dx^4=f\, \vol|_{U_i},\ \forall_{p\in\manifold} f(p)>0,\\
\mathring{e}_i^1=dx^1,\mathring{e}_i^2=dx^2 ,\mathring{e}^{3}=dx^4,\,\mathring{e}^{4}=dx^{3}, {\rm\ if\ } dx^1\wedge dx^2\wedge dx^3\wedge dx^4=f\, \vol|_{U_i},\ \forall_{p\in\manifold} f(p)<0.
\end{align*}
As always $x^I$ are the component functions of a coordinate map $\phi_i$:
$$
\phi_i(p)=(x^1(p),\ldots,x^4(p)).
$$
The canonical $B$-field satisfying the boundary conditions defined by $\mu$ is
$$
\mathring{B}^{IJ}=\star (\mathring{\triad}^I\wedge \mathring{\triad}^J).
$$
There is a canonical connection on $\bundle$:
$$
\mathring{\connection}_i = 0.
$$
Clearly, the connection $\mathring{\connection}$ is flat
$$
\curvature=0
$$
and compatible with the vierbein $\mathring{\triad}$:
$$
d \mathring{e}_i^I+\tensor{\mathring{\omega}}{_i^I_J}\wedge \mathring{e}_i^J=d\mathring{e}_i^I =0.
$$
In particular,
$$
\D \mathring{B}=0.
$$
\begin{rmk}
Although the connection $\mathring{\connection}$ is flat, a holonomy around a defect can be non-trivial. Let us consider for example an $n-2$ simplex $\simplex_f$ and its dual face $f=(\simplex_1,\ldots,\simplex_N)$. The holonomy around the loop $\gamma_f$ is
$$
U_{\gamma_f}=g_{1\, 2}\ \ldots\ g_{N-1\,N}\ g_{N\,1},
$$
where $g_{ij}$ are the transition functions. Let us note that the transition functions need to be constant, because $\mathring{\connection}_i=0$. These equations are often interpreted in the literature as a discretization of the connection \cite{Bander,Khatsymovsky,Baezintro}. In our approach the connection is smooth (not discretized) and $g_{ij}$ are transition functions defining an (in general non-trivial) principal fibre bundle.
\end{rmk}
\section{Relation between Regge calculus and BF theory on manifolds with defects}
\subsection{Regge action}
Since we focus on the (physical) Lorentzian signature, we use a Lorentzian version of the Regge action introduced in \cite{BarrettFoxon} (see also \cite{BarrettAsymptotics}). We will review it in this subsection.

Let $\complex$ be a 4-dimensional oriented pseudomanifold. Regge data is an assignment to each 1-simplex $e\in \complex^{(1)}$, called an edge, its length $\length(e)$.  A Lorentzian geometric structure $\sigma$ on $\complex$ defines the length of each edge $e=\{v_0,v_1\}$ as the unique positive number $\length(e)$ such that
$$
\length(e)^2=\intmetric(\sigma_i(v_1)-\sigma_i(v_0),\sigma_i(v_1)-\sigma_i(v_0)),
$$
where $\simplex_i$ is any $n$-simplex which face is the edge $e$, $e\isfaceof \simplex_i$. Let $N_{ij}$ be the outward pointing normal to the tetrahedron $|\simplex_{i}\cap \simplex_{j}|_j$. We can further classify the outward pointing normals as future pointing or past pointing according to the standard time-orientation of the Minkowski space. Following \cite{BarrettFoxon,BarrettAsymptotics} we define a dihedral angle between tetrahedra $|\simplex_{i}\cap \simplex_{j}|_i$ and $|\simplex_{i}\cap \simplex_{k}|_i$ at the triangle $\simplex_f=\simplex_i\cap\simplex_j\cap \simplex_k$ (we assume that $\simplex_i\cap\simplex_j\cap \simplex_k$ is a 2-simplex) to be (see figure \ref{fig:dihedral_angle}):
\begin{itemize}
\item the positive angle $\dihedral_{f,i}$ such that
$$
\cosh(\dihedral_{f,i})=N_{ji}\cdot N_{ki},
$$
if one of the normals is future-pointing and the other is past-pointing;
\item the negative angle $\dihedral_{f,i}$ such that
$$
\cosh(\dihedral_{f,i})=-N_{ji}\cdot N_{ki},
$$
if both normals are either future-pointing or past-pointing.
\end{itemize}
\begin{figure}
\begin{subfigure}{0.5\textwidth}
 \centering
\includegraphics[width=.9\linewidth]{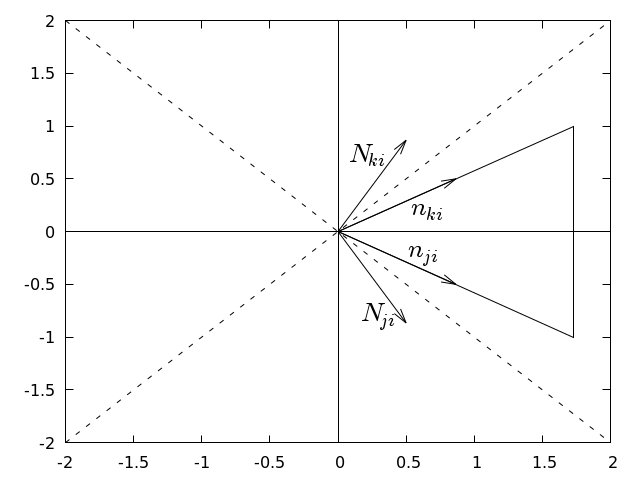}
  \caption{Thin wedge.}
  \label{fig:thin_wedge}
\end{subfigure}
\begin{subfigure}{0.5\textwidth}
 \centering
\includegraphics[width=.9\linewidth]{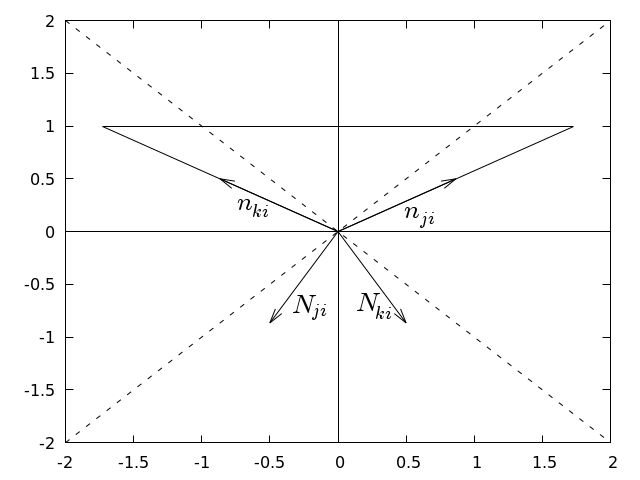}
  \caption{Thick wedge.}
  \label{fig:thick_wedge}
\end{subfigure}
\caption{We define a dihedral angle $\dihedral_{f,i}$ between tetrahedra $|\simplex_{i}\cap \simplex_{j}|_i$ and $|\simplex_{i}\cap \simplex_{k}|_i$ at the triangle $\simplex_f=\simplex_i\cap\simplex_j\cap \simplex_k$. The vector $N_{ij}$ is the unit outward pointing normal to the tetrahedron $|\simplex_{i}\cap \simplex_{j}|_j$. The vector $n_{ij}$ is the unit vector parallel to the tetrahedron $|\simplex_i\cap \simplex_j|_j$, orthogonal to the triangle $|\simplex_f|_j$ and pointing inside the tetrahedron. The dihedral angle $\dihedral_{f,i}$ is the positive angle such that $\cosh(\dihedral_{f,i})=N_{ji}\cdot N_{ki}$ in case (a) and the negative angle $\dihedral_{f,i}$ such that $\cosh(\dihedral_{f,i})=-N_{ji}\cdot N_{ki}$ in case (b).}
\label{fig:dihedral_angle}
\end{figure}
Following \cite{BarrettFoxon} we will say that tetrahedra $|\simplex_{i}\cap \simplex_{j}|_i$ and $|\simplex_{i}\cap \simplex_{k}|_i$ form a thin wedge at the triangle $|\simplex_f|_i$ if $\dihedral_{f,i}$ is positive or thick wedge if it is negative. 

We define the deficit angle to be the negative of the sum of dihedral angles at a triangle $\simplex_f$
$$
\deficit_f= -\sum_{i: \simplex_i\cap \simplex_f \neq \emptyset} \dihedral_{f,i}.
$$
Let $\area_f$ be the area of any triangle $|\simplex_f|_i$ (let us note that the area is the same for any $i$). The Regge action takes the following form:
$$
S_{\rm Regge}(\complex,\length)= \sum_{\simplex_f \in \complex^{(2)}} \area_f \deficit_f.
$$

\subsection{Relation between Regge action and BF theory on manifolds with defects}
Remarkably, the evaluation of the action action functional \eqref{eq:action} on the solutions of its Euler-Lagrange field equations satisfying the boundary conditions from Section \ref{sc:boundarydata} coincides with the evaluation of the Regge action at Regge variables defined by the boundary data. This is shown by the following theorem.
\begin{thm}
Let $\mathring{\B},\mathring{\connection}$ be the solutions of the Euler-Lagrange field equations satisfying the boundary conditions defined by a family of maps $\mu_i:\simplex_i\to \mathbb{R}^4$ (see Section \ref{sc:boundarydata}). Denote by $\length:\complex^{(1)}\to \mathbb{R}_+$ the map assigning to each edge $e=\{v_0,v_1\}$ the unique positive number $\length(e)$ such that
$$
\length(e)^2=\intmetric(\mu_i(v_1)-\mu_i(v_0),\mu_i(v_1)-\mu_i(v_0)),\quad e\isfaceof \simplex_i.
$$
The following equality holds:
$$
\frac{1}{2}S[\mathring{\B},\mathring{\connection}]=S_{\rm Regge}(\complex,\length).
$$
\end{thm}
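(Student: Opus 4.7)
The plan is to reduce the identity to a face-by-face statement and then evaluate each face contribution geometrically. From Section 5.2 we already have
$$
S[\mathring{\B},\mathring{\connection}]=\sum_{\simplex_f \in \complex^{(2)}} \Tr{\B_{f;i}\, \curvaturematrix_{f;i}},
$$
and by definition $S_{\rm Regge}(\complex,\length)=\sum_{\simplex_f}\area_f \deficit_f$, so it suffices to establish the local identity
$$
\frac{1}{2}\Tr{\B_{f;i}\, \curvaturematrix_{f;i}}=\area_f \deficit_f
$$
for each $\simplex_f\in\complex^{(2)}$. The boundary-condition formula for $\B_{f;i}$ identifies it, via the four-dimensional Hodge dual, with the area bivector of $\gsimplexm{\simplex_f}{i}$: it is a simple bivector supported in the $2$-plane normal to the triangle, with magnitude proportional to $\area_f$.

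Next I would analyze $\curvaturematrix_{f;i}$. Since the canonical connection has $\mathring{\connection}_i=0$ in each chart $U_i$, the holonomy around $\gamma_f$ reduces to the ordered product $U_{\gamma_f}=g_{1N}g_{N,N-1}\cdots g_{21}$ of the constant transition matrices $g_{k,k-1}=\D\alpha_{k,k-1}$. Each $\alpha_{k,k-1}$ is an affine isometry identifying $\gsimplexm{\simplex_{k-1}\cap\simplex_k}{k-1}$ with $\gsimplexm{\simplex_{k-1}\cap\simplex_k}{k}$, so it sends $\gsimplexm{\simplex_f}{k-1}$ to $\gsimplexm{\simplex_f}{k}$ vertex by vertex. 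Consequently $U_{\gamma_f}$ stabilizes the tangent $2$-plane of $\gsimplexm{\simplex_f}{i}$ pointwise and acts as a Lorentz transformation in its normal $2$-plane. The assumption that all edge vectors $\mu_i(v)-\mu_i(w)$ are space-like makes $\simplex_f$ space-like and its normal $2$-plane Lorentzian of signature $(1,1)$, so $U_{\gamma_f}$ is a pure boost and $\curvaturematrix_{f;i}$ is its boost generator — a bivector in the same normal plane as $\B_{f;i}$, hence proportional to $\B_{f;i}$.

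The decisive step is to identify the boost parameter of $\curvaturematrix_{f;i}$ with $\deficit_f$. I would unfold the chain of $4$-simplices into a single copy of $\mathbb{R}^4$ by successively applying $\alpha_{12}^{-1},\ \alpha_{12}^{-1}\alpha_{23}^{-1},\ldots$ to $\gsimplexm{\simplex_2}{2},\gsimplexm{\simplex_3}{3},\ldots$ Within the $k$-th unfolded image of $\simplex_k$, the two boundary tetrahedra meeting at $|\simplex_f|$ form exactly the dihedral angle $\dihedral_{f,k}$ in the normal $2$-plane, with the thin/thick wedge sign convention of Section 6.1 dictated by whether the outward normals $N_{k,k-1}$ and $N_{k,k+1}$ have the same or opposite time-orientation. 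Summing over $k=1,\ldots,N$, the total Lorentz rotation carrying one side of $|\simplex_f|_1$ around the hinge back to itself equals $\sum_k \dihedral_{f,k}=-\deficit_f$. Since unfolding implements the ordered product of the $\alpha_{k,k-1}^{-1}$, this rotation is precisely the boost angle of $U_{\gamma_f}^{-1}$, giving $\curvaturematrix_{f;i}$ the boost parameter $\deficit_f$.

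Finally, choosing an adapted orthonormal frame in which the tangent plane of $\gsimplexm{\simplex_f}{i}$ is the $1$-$2$ plane, one obtains $\B_{f;i}=\area_f K$ and $\curvaturematrix_{f;i}=\deficit_f K$ for the boost generator $K$ with $K^{0}{}_{3}=K^{3}{}_{0}=1$; using $\Tr{K^{2}}=2$ then yields $\Tr{\B_{f;i}\curvaturematrix_{f;i}}=2\area_f\deficit_f$, which is the required local identity. The main obstacle is the sign bookkeeping in the unfolding step: one must verify that the orientation of $\gamma_f$, the thin/thick wedge convention for $\dihedral_{f,k}$, the future/past-pointing choice of the outward normals $N_{ji}$ used both in defining $\curvaturematrix_{f;i}$ (through the holonomy sign and the choice of $\gamma_f$) and in the sign of $\dihedral_{f,k}$, together with the Hodge-star convention entering $\B_{f;i}$, all interlock to produce the stated identity with the positive coefficient $2\area_f\deficit_f$ rather than a spurious $\pm 1$ or factor of $2$.
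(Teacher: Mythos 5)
Your overall route is the same as the paper's: reduce to the per-face identity $\Tr{\B_{f;i}\curvaturematrix_{f;i}}=2\area_f\deficit_f$, pass to a gauge in which all transition functions along the dual face are trivial except one (your ``unfolding''), identify $\B_{f;i}$ with the bivector of magnitude $\area_f$ in the normal plane of the triangle, and read off the holonomy as an accumulation of the dihedral angles $\dihedral_{f,k}$. The reduction, the identification of $\B_{f;i}$, and the final trace computation are all fine.

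The genuine gap is the assertion that $U_{\gamma_f}$ ``is a pure boost and $\curvaturematrix_{f;i}$ is its boost generator --- a bivector in the same normal plane as $\B_{f;i}$, hence proportional to $\B_{f;i}$.'' This is exactly the point where the thin/thick wedge dichotomy cannot be deferred to ``sign bookkeeping.'' In the unfolded picture, a thin wedge at $\simplex_k$ does contribute a pure boost $\exp(\frac{\dihedral_{f,k}}{\area_f}\B_{f;k})$ carrying $n_{k-1\,k}$ to $n_{k+1\,k}$, but a thick wedge contributes $-\exp\bigl(\frac{\dihedral_{f,k}}{\area_f}\B_{f;k}+\frac{\pi}{\area_f}\star\B_{f;k}\bigr)$: the two inward normals lie in opposite boost quadrants of the Lorentzian normal plane and no pure boost connects them. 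Consequently the accumulated holonomy is $(-1)^{\sum_k\Pi_{f,k}}\exp\bigl(\sum_k(\frac{\dihedral_{f,k}}{\area_f}\B_{f;k}+\Pi_{f,k}\frac{\pi}{\area_f}\star\B_{f;k})\bigr)$; when the number of thick wedges is odd this matrix is not in ${\rm SO}(1,3)^+$ (it reverses time orientation on the normal plane), which is why the paper's definition of $\curvaturematrix_f$ carries the $\pm U_{\gamma_f}$ prescription, and in all cases with at least one thick wedge $\curvaturematrix_{f;i}$ acquires a component along $\star\B_{f;i}$, a rotation generator in the plane of the triangle --- so it is \emph{not} proportional to $\B_{f;i}$. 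The theorem survives because $\Tr{\B_{f;i}\star\B_{f;i}}=0$ for a simple bivector, so the extra piece drops out of the trace; but your adapted-frame computation $\curvaturematrix_{f;i}=\deficit_f K$ is false as stated, and establishing that the offending terms are trace-orthogonal to $\B_{f;i}$ is the substantive content you would still need to supply.
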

\begin{proof}
In order to show it, it is enough to check that
$$
\Tr{ B_{f;i} \curvaturematrix_{f;i}}=2\area_f \deficit_f.
$$
Without loss of generality, we can assume, that the $n$-simplices are numbered such that $f=(\simplex_1,\ldots,\simplex_N)$. Since $\Tr{ B_{f;i} \curvaturematrix_{f;i}}$ is gauge invariant, we will work in a convenient gauge in which
$$g_{1\,2}=g_{2\,3}=\ldots=g_{N-1\,N}=\identity,$$
where $\identity$ is the identity matrix. In fact, without loss of generality we can assume that 
$$
\mu_{i}(v)=\mu_j(v),
$$
whenever $v\in \simplex_i\cap \simplex_j,\, i,j\in\{1,\ldots,N\}$\footnote{To this end we use an equivalent geometric structure obtained by appropriate translations.}. Let $\simplex_f=\{v_0,v_1,v_2\}$ and assume that the orientation of $|\simplex_f|_i$ is such that:
$$
B_{f;i}^{IJ}=\frac{1}{2}\tensor{\epsilon}{^{IJ}_{KL}} (\mu_{i}(v_1)-\mu_i(v_0))^K (\mu_{i}(v_2)-\mu_i(v_0))^L.
$$
We will denote by the same symbol $B_{f;i}$ the Lie algebra element
$$
\tensor{B}{_{f;i}^I_J}=B_{f;i}^{IK}\intmetric_{KJ}.
$$
and by $\star B_{f;i}$ the Lie algebra element
$$
\tensor{\star B}{_{f;i}^I_J}=\frac{1}{2}\tensor{\epsilon}{^I_{JKL}}B_{f;i}^{KL}.
$$
We consider a matrix
$$
\exp(-\frac{\dihedral_{f,i}}{\area_f} B_{f;i}).
$$
Let $n_{ij}$ be the vector parallel to the tetrahedron $|\simplex_i\cap \simplex_j|_j$, orthogonal to the triangle $|\simplex_f|_j$ and pointing inside the tetrahedron, i.e.
$$
n_{ij} \cdot (\mu_j(v_3)-\mu_j(v_0))>0,\, \textrm{where }\simplex_i\cap\simplex_j=\{v_0,v_1,v_2,v_3\}.
$$
Let $N_{ij}$ be the outward pointing normal to the tetrahedron $|\simplex_i\cap \simplex_j|_j$ (considered as a boundary tetrahedron of $|\simplex_j|$). Denote by $w_i,\, i\in\{1,\ldots,N\},$ the unique vertex of $\simplex_i$ such that $w_i \notin \simplex_i\cap \simplex_{i+1}$. Using this notation we have
$$
\simplex_i=\{v_0,v_1,v_2,w_i,w_{i+1}\},
$$
where $w_{N+1}:=w_1$. Let us note that
$$
\vol[\complex]([v_0,v_1,v_2,w_i,w_{i+1}])=1.
$$
This means in particular that
$$
n_{i-1\,i}^I=- \frac{1}{\area_f}\tensor{B}{_{f;i}^I_J} N_{i-1\,i}^J,
$$
for $i\in\{1,2,\ldots,N\}$, where we used a notation $n_{0\,1}:=n_{N\,1},\,N_{0\,1}:=N_{N\,1}$.
A straightforward calculation shows that
$$
N_{i-1\,i}^I= - \frac{1}{\area_f} \tensor{B}{_{f;i}^I_J} n_{i-1\,i}^J.
$$
Using these definitions we can easily calculate $\exp(-\frac{\dihedral_{f,i}}{\area_f}B_{f;i})N_{i-1\,i}$ by expanding into power series:
$$
\exp(-\frac{\dihedral_{f,i}}{\area_f}B_{f;i})N_{i-1\,i}=\sum_{k=0}^{\infty} \frac{(\dihedral_{f,i})^{2k}}{(2k)!} N_{i-1\,i}+\sum_{k=0}^{\infty} \frac{(\dihedral_{f,i})^{2k+1}}{(2k+1)!} n_{i-1\,i}.
$$
As a result
\begin{equation}\label{eq:boostmatrixN}
\exp(-\frac{\dihedral_{f,i}}{\area_f}B_{f;i})N_{i-1\,i}=\cosh(\dihedral_{f,i}) N_{i-1\,i}+\sinh(\dihedral_{f,i}) n_{i-1\,i}.
\end{equation}
Similarly
\begin{equation}\label{eq:boostmatrixn}
\exp(-\frac{\dihedral_{f,i}}{\area_f}B_{f;i})n_{i-1\,i}=\sinh(\dihedral_{f,i}) N_{i-1\,i}+\cosh(\dihedral_{f,i}) N_{i-1\,i}.
\end{equation}

We will consider now separately the two possibilities:
\begin{enumerate}
\item The tetrahedra $|\simplex_{i-1}\cap \simplex_i|_i$ and $|\simplex_{i}\cap \simplex_{i+1}|_i$ form a thin wedge at the triangle $|\simplex_f|_i$. In this case (see figure \ref{fig:thin_wedge})
$$
N_{i+1\,i}=-\cosh(\dihedral_{f,i}) N_{i-1\,i}+\sinh(\dihedral_{f,i}) n_{i-1\,i},
$$
$$
n_{i+1\,i}=-\sinh(\dihedral_{f,i}) N_{i-1\,i}+\cosh(\dihedral_{f,i}) n_{i-1\,i},
$$
where $i\in\{1,\ldots,N\}$ and $N_{N+1\,N}:=N_{1\,N}, n_{N+1\,N}:=n_{1\,N}$.
Comparing with formula \eqref{eq:boostmatrixN} and \eqref{eq:boostmatrixn} we obtain:
$$
N_{i+1\,i}=-\exp(\frac{\dihedral_{f,i}}{\area_f}B_{f;i})N_{i-1\,i}.
$$
and 
$$
n_{i+1\,i}=\exp(\frac{\dihedral_{f,i}}{\area_f}B_{f;i})n_{i-1\,i}.
$$
\item The tetrahedra $|\simplex_{i-1}\cap \simplex_i|_i$ and $|\simplex_{i}\cap \simplex_{i+1}|_i$ form a thick wedge at the triangle $|\simplex_f|_i$. In this case (see figure \ref{fig:thick_wedge})
$$
N_{i+1\,i}=\cosh(\dihedral_{f,i}) N_{i-1\,i}-\sinh(\dihedral_{f,i}) n_{i-1\,i},
$$
$$
n_{i+1\,i}=\sinh(\dihedral_{f,i}) N_{i-1\,i}-\cosh(\dihedral_{f,i}) n_{i-1\,i}.
$$
Comparing with formula \eqref{eq:boostmatrixN} and \eqref{eq:boostmatrixn} we obtain:
$$
N_{i+1\,i}=\exp(\frac{\dihedral_{f,i}}{\area_f}B_{f;i})N_{i-1\,i},
$$
and 
$$
n_{i+1\,i}=-\exp(\frac{\dihedral_{f,i}}{\area_f}B_{f;i})n_{i-1\,i}.
$$
Let us note that $$-\exp(\frac{\dihedral_{f,i}}{\area_f}B_{f;i} + \frac{\pi}{\area_f}  \star B_{f;i})$$
is an SO(1,3) transformation fixing each vector parallel to the triangle $|\simplex_f|_i$ and mapping $n_{i-1\,i}$ into $n_{i+1\,i}$.
\end{enumerate}

Since we chose a gauge in which $g_{1\,2}=g_{2\,3}=\ldots=g_{N-1\,N}=\identity$, we have
$$
N_{i+1\,i}=-N_{i\, i+1}, n_{i+1\,i}=n_{i\,i+1}{\rm\ for\ }i\in\{1,\ldots,N-1\}.
$$
Let us define 
$$
\Pi_{f,i}:=\begin{cases} 0\textrm{, if the tetrahedra } |\simplex_{i-1}\cap \simplex_i|_i \textrm{ and }|\simplex_{i}\cap \simplex_{i+1}|_i\textrm{ form a thin wedge at the triangle }|\simplex_f|_i,\\
1\textrm{, if the tetrahedra } |\simplex_{i-1}\cap \simplex_i|_i \textrm{ and }|\simplex_{i}\cap \simplex_{i+1}|_i\textrm{ form a thick wedge at the triangle }|\simplex_f|_i.  \end{cases}
$$
Combining the observations above we conclude that
$$(-1)^{\sum_{i} \Pi_{f,i}}e^{\sum_i\left(\frac{\dihedral_{f,i}}{\area_f}B_{f;i} + \Pi_{f,i} \frac{\pi}{\area_f}  \star B_{f;i}\right)}$$
is the unique SO(1,3) transformation that fixes each vector parallel to the triangle $|\simplex_f|_i$ and maps $n_{N\,1}$ into $n_{1\,N}$. Therefore it coincides with $(g_{1\,N})^{-1}$. This means that
$$
\curvaturematrix_{f;1}=-\sum_i\left(\frac{\dihedral_{f,i}}{\area_f}B_{f;i} + \Pi_{f,i} \frac{\pi}{\area_f}  \star B_{f;i}\right)
$$
Since $\mu_i(v)=\mu_1(v),$ for $v\in\simplex_f$ and $\star B_{f;i}$ is orthogonal to $B_{f;i}$, we conclude
$$
\Tr{B_{f;1}\curvaturematrix_{f;1}}= 2\deficit_f \area_f.
$$
\end{proof}
\section{Summary, discussion and outlook}
For each pseudomanifold we constructed a smooth manifold, which we called a manifold with defects. In the standard approach a manifold with defects is obtained by triangulating a smooth manifold and removing simplices of dimension not exceeding $n-2$. Our construction is more general, because it includes all manifolds with defects obtained by the standard procedure but allows also manifolds with defects that cannot be completed to smooth manifolds without defects. Thanks to this our manifolds have purely combinatorial characterization. This has important technical and conceptual consequences. From technical point of view, it gives a better control over the histories of gravitational field that contribute to the path integral. From the conceptual point of view, it supports a scenario where a smooth space-time emerges from more fundamental combinatorial object as expected for example in \cite{Penrose}.

Our manifolds with defects are not simply connected. Therefore a holonomy of a flat connection around a closed loop can be non-trivial. As in the Aharonov-Bohm effect the curvature (describing magnetic field) has distributional support on the defect (describing thin and long solenoid). Since our manifolds in full generality cannot be completed to manifolds without defects these distributions have to be appropriately regularized. We regularized them by defining the action functional as a limit of Riemann-like sums. In our regularization we define (implicitly) the curvature at points of the gluing $\gluing$, where the metric is not smooth or even a neighbourhood of a point is not homeomorphic to an open subset of $\mathbb{R}^n$. It shares these features with the Lipschitz-Killing curvatures introduced in \cite{Cheeger, Wintgen, ChMSch} for Riemannian metrics. However, we focus on 4-dimensional Lorentzian Gravity in the Plebański formulation and we do not use the Lipschitz-Killing curvatures explicitly.

We imposed certain boundary conditions that correspond to Regge geometries. We showed that the field equations resulting from our action functional are
$$
D B =0,\quad \curvature=0
$$
on the manifold with defects. It turned out that our action functional evaluated at solutions of the field equations satisfying our boundary conditions coincides with the Regge action evaluated at the Regge variables defined by the boundary data. As a result the Hamilton-Jacobi functional for our action has an interpretation in terms of the Regge action. Therefore we expect that at the quantum level the theory defined by our action is a quantum version of Regge calculus. This quantum theory should be constructed using the spin-foam approach. The starting point should be a discrete action corresponding to a subdivision $\complex'_m$. We expect that a refinement limit $m\to\infty$ could be calculated exactly, because the spin-foam model of BF theory does not depend on a triangulation and we impose the constraints only at the boundary, not in the interior. The foams would be naturally embedded in the manifold with defects. Since each slice of such foam would be a spin-network state embedded in 3-dimensional manifold with defects and the curvature of the connection is concentrated on the defects, it seems possible that the spin foams constructed this way could define the dynamics of the Loop Quantum Gravity states in the recently introduced BF representation \cite{DittrichBFI,DittrichBFII,DittrichBFIII,DittrichBFIV,DittrichBFV}. 

Our approach provides new interpretation of the imposition of the constraints in the spin-foam models. General Relativity can be viewed as a BF theory with constraints \cite{Plebanski}. The constraints enforce equality of the right-handed area and the left-handed area defined by the B field \cite{ReisenbergerArea} for any surface embedded in space-time (see also \cite{GeneralizedSF}). We propose that if we impose the constraints only on the surface corresponding to the defects we obtain an approximation of General Relativity which coincides with the Regge theory. As a result, we propose an alternative interpretation of a single spin foam -- not as a truncation of the full theory obtained by discretization but rather by imposing the constraints only on certain surface, not on any surface. 

A relation between a theory of discrete gravity and topological field theory with curvature defects has been recently studied in \cite{WielandTQFT}. As in the model in \cite{WielandTQFT} we obtain a field theoretic description of a theory of discrete gravity. In our case this theory is precisely the Regge theory. In particular the curvature is concentrated on 2-dimensional cells whereas in \cite{WielandTQFT} it is concentrated on 3-dimensional cells. Another difference is that in \cite{WielandTQFT} the defects are light-like whereas in our work they are space-like. 

\section*{Acknowledgements}
I would like to thank Marios Christodoulou, Fabio D'Ambrosio, Klaus Liegener and Carlo Rovelli for interesting discussions. I am grateful for hospitality at the Centre de Physique Theorique de Luminy during my visit. This work has been partially supported by the grant of Polish Narodowe Centrum Nauki nr 2011/02/A/ST2/00300.
\bibliography{BFRegge}{}
\bibliographystyle{ieeetr}
\end{document}